\documentclass[10pt,journal,compsoc]{IEEEtran}

\usepackage{graphicx,color,colortbl,psfrag}
\usepackage[usenames,dvipsnames]{xcolor}
\usepackage{bbding}
\usepackage{subfigure}
\usepackage{tabularx}
\usepackage{multirow}
\usepackage{array}
\usepackage{upgreek}
\usepackage{bigints}
\usepackage{epsfig}
\usepackage{mdwlist}
\usepackage{flushend}
\usepackage{array}
\usepackage{bm}

\usepackage{amssymb}
\usepackage{amsmath}
\usepackage{amsfonts}
\usepackage{mathrsfs}
\usepackage{pifont}

\usepackage{eucal}
\usepackage{eurosym}

\usepackage[bookmarks=true,
			colorlinks=true,
			linkcolor=Blue, 
			urlcolor=Blue,  
			citecolor=Blue]{hyperref}
\usepackage{url}

\usepackage[font=normalsize,
			font=bf,
			labelfont=bf]{caption}

\usepackage[thmmarks,thref,hyperref,amsmath]{ntheorem}
\theoremstyle{numberplain}

\newtheorem*{proof}{Proof}

\newtheorem{theorem}{Theorem}

\DeclareMathOperator*{\argmax}{arg\,max}

\usepackage[ruled]{algorithm}
\usepackage{algpseudocode}

\begin{document}
%
\title{\huge Pricing Average Price Advertising Options When Underlying Spot Market Prices Are Discontinuous}

\author{
Bowei~Chen\IEEEcompsocitemizethanks{\IEEEcompsocthanksitem Bowei Chen is with the Adam Smith Business School, University of Glasgow, Glasgow, UK, G12 8QQ. E-mail: \href{mailto:bowei.chen@glasgow.ac.uk}{bowei.chen@glasgow.ac.uk}},~\IEEEmembership{Member,~IEEE,}
and Mohan Kankanhalli\IEEEcompsocitemizethanks{\IEEEcompsocthanksitem Mohan Kankanhalli is with the School of Computing, National University of Singapore, Singapore 117417. \protect E-mail: \href{mailto:mohan@comp.nus.edu.sg}{mohan@comp.nus.edu.sg}}~\IEEEmembership{Fellow,~IEEE.}
}
%

\markboth{IEEE Transactions on Knowledge and Data Engineering}%
{IEEE Transactions on Knowledge and Data Engineering}


\IEEEtitleabstractindextext{%

\begin{minipage}{1\linewidth}
\begin{abstract}
Advertising options have been recently studied as a special type of guaranteed contracts in online advertising, which are an alternative sales mechanism to real-time auctions. An advertising option is a contract which gives its buyer a right but not obligation to enter into transactions to purchase page views or link clicks at one or multiple pre-specified prices in a specific future period. Different from typical guaranteed contracts, the option buyer pays a lower upfront fee but can have greater flexibility and more control of advertising. Many studies on advertising options so far have been restricted to the situations where the option payoff is determined by the underlying spot market price at a specific time point and the price evolution over time is assumed to be continuous. The former leads to a biased calculation of option payoff and the latter is invalid empirically for many online advertising slots. This paper addresses these two limitations by proposing a new advertising option pricing framework. First, the option payoff is calculated based on an average price over a specific future period. Therefore, the option becomes path-dependent. The average price is measured by the power mean, which contains several existing option payoff functions as its special cases. Second, jump-diffusion stochastic models are used to describe the movement of the underlying spot market price, which incorporate several important statistical properties including jumps and spikes, non-normality, and absence of autocorrelations. A general option pricing algorithm is obtained based on Monte Carlo simulation. In addition, an explicit pricing formula is derived for the case when the option payoff is based on the geometric mean. This pricing formula is also a generalized version of several other option pricing models discussed in related studies~\cite{Black_1973, Merton_1973, Merton_1976,Zhang_1998,Kou_2002,Chen_2015_1}. 
\end{abstract}
\end{minipage}

\begin{IEEEkeywords}
Online Advertising, Advertising Options, Stylized Facts, Jump-Diffusion Stochastic Processes, Option Pricing 
\end{IEEEkeywords}}

\maketitle

\IEEEdisplaynontitleabstractindextext

%
\IEEEpeerreviewmaketitle

\section{Introduction}
\label{sec:introduction}

\IEEEPARstart{O}NLINE advertising refers to advertising using digital technologies through the Internet, where advertisers can quickly promote product information to the targeted customers. Publishers and search engines usually use two ways to sell advertising inventories like page views (also called \emph{impressions}) or link clicks to advertisers~\cite{Evans_2009}. The most popular way is the sealed-bid auction, such as the Generalized Second Price (GSP) auction~\cite{Edelman_2007_2,Varian_2007} and the Vickrey-Clarke-Groves (VCG) auction~\cite{Parkes_2007}. These auction models have been designed with many desirable economic properties. For example, the GSP auction has a locally Envy-free equilibrium, and the VCG auction is efficient and incentive compatible. However, auction models also have limitations. First, it is difficult for advertisers to predict their campaign costs because competition is not visible and occurs in real time. Competiting advertisers and their bidding strategies may change significantly in sequential auctions. Second, the seller's revenue can be volatile due to the uncertainty in auctions. Also, the \lq\lq{}pay-as-you-go\rq\rq{} nature of auctions does not encourage advertisers' engagement because an advertiser can switch from one advertising platform or marketplace to another in the next bidding at near-zero cost. Guaranteed contracts are an alternative way of selling advertising inventories, which can alleviate the limitations of auctions. Usually, an advertiser negotiates a bulk deal with a seller privately. Guaranteed contracts have been recently studied from a variety of different perspectives. Contributors include~\cite{Constantin_2009,Bharadwaj_2010,Salomatin_2012,Bharadwaj_2012,Turner_2012,Chen_2014_2,Hojjat_2014,Chen_2016}. 
However, guaranteed contracts are less flexible. For example, an advertiser needs to make the full non-refundable payment upfront.


Advertising options are a special kind guaranteed contact, allowing its buyer to pay a small upfront fee in exchange for a priority buying right of targeted advertising inventories in the future. The per-inventory payment in the future is pre-specified according to the targeted inventories -- it can be a fixed cost-per-mille (CPM) for impressions in display advertising or cost-per-click (CPC) for clicks in sponsored search. The upfront fee is called the \emph{option price} and the future per-inventory payment is called the \emph{exercise price}. The future payments are not obligatory, which will be based on the number of future deliveries through option exercising by the buyer. Therefore, the advantages of advertising options are obvious. Compared to auctions, the buyer can guarantee the targeted deliveries in the future within a budget constraint. The prepaid option price functions as an \lq\lq{}insurance\rq\rq{} to cap the cost of advertising. Compared to guaranteed contracts, advertising options give the buyer greater flexibility and more control in advertising as he can decide when and whether to exercise the option. Also, advertising options can be seamlessly integrated with the existing auction models because the option buyer's cost is just the pre-paid option price and he can join advertising auctions if he doesn't want to exercise the purchased option in the future. On the sell side, selling advertising options gives publishers and search engines some upfront incomes apart from real-time auctions. More importantly, they are able to establish a contractual relationship with advertisers, which has great potential to increase the long-term revenue. 

Option pricing refers to the calculation of option price for the given specifications. It contains several building blocks: the modeling of underlying price movement; the formulation of option payoff; and the pricing condition or assumption. Previous studies on advertising options have been restricted to two situations. Firstly, the advertising options are path-independent and their payoffs are calculated based on the value of the underlying spot market price at a specific time point. It should be noted that the underlying spot market price is the winning payment price of target inventories from real-time auctions. Since advertising options allow buyers to buy but not sell, the optimal time of option exercising are the option expiration date~\cite{Wang_2012_1,Chen_2015_1}. This leads to the biased calculation of option payoff towards the terminal value. The second limitation is that previous research assumes that the underlying spot market price follows a continuous stochastic process over time. This assumption is not valid for many online advertising slots~\cite{Chen_2015_1,Chen_2015_2}. For example, price discontinuity such as spikes and jumps can be seen in~Figs.~\ref{fig:stylized_fact_ssp}-\ref{fig:stylized_fact_google}.

This paper presents a robust option pricing framework which can be used for general situations. The following contributions are made. Firstly, the option payoff function is designed based on a variable that measures the average underlying spot market prices over a specific time period rather than a time point. It is thus less biased and gives a better overall measurement on price movement, particularly, if there is any price jumps and spikes. Secondly, we use the power mean to calculate the average value, whose special cases and limiting cases offer several different option payoff structures. Therefore, the studied average price advertising option becomes a generalized framework of those relevant advertising options. Thirdly, jump-diffusion stochastic processes are used to describe the underlying spot market price evolution. They can incorporate several important empirical properties including the price discontinuity. We also summarize the empirical properties of prices from advertising auctions. To the best of our knowledge, this is the very first work that provides a such summary. Finally, we discuss how to effectively price the proposed average advertising option via Monte Carlo simulation and also obtain an explicit solution for a special case which generalizes some option pricing models in the related work.

The rest of this paper is organized as follows. Section~\ref{sec:related_work} provides a literature review on options and discusses their recent applications in online advertising. Section~\ref{sec:advertising_options} introduces the basic concepts and transaction procedures of the proposed average price advertising options. Section~\ref{sec:notations_and_setup} sets up the notations and the building blocks of the option pricing model. In Section~\ref{sec:option_pricing_methods}, we discuss the option pricing framework and our solutions. Section~\ref{sec:experiments} presents our experimental results and Section~\ref{sec:conclusion} concludes the paper.

\section{Related Work}
\label{sec:related_work}
 
Options have been used in many fields. Financial options are an important derivative for investors to speculate on profits as well as to hedge risk~\cite{Shreve_2004_2}. Real options deal with choices about real investments as opposed to financial investments, which have become an effective decision-making tool for business projects planning and corporate risk management~\cite{Boer_2002}. Below we first review the important concepts and models in option pricing theory, and then discuss several previous research on advertising options.

Option pricing can be traced back to Bachelier~\cite{Bachelier_1900} who proposed to use a continuous-time random walk as the underlying process to price a call option written on a stock. Call options are a type of option which allows its buyer to buy the underlying assets. Continuous-time random walk is also called \emph{Brownian motion} or \emph{Wiener process}. It is a continuous-path stochastic process $\{W(t), t \geq 0\}$ which satisfies the following conditions: (i) $W(0) = 0$; (ii) the increment $W(t+dt) - W(t)$ is normally distributed $\mathbf{N}(0, dt)$; and (iii) the increment $W(t+dt) - W(t)$ is independent of $\mathcal{F}_t$, the history of what the process did up to time $t$ (also called the \emph{filtration})~\cite{Shreve_2004_2}. Therefore, a Brownian motion is simultaneously a Markov process and a martingale~\cite{Shreve_2004_2}. These two processes are important tools in option pricing. The former describes a random system that changes states according to a transition rule that only depends on the current state, e.g., $W(t+dt) - W(t)$ is independent of $\mathcal{F}_t$. The latter is the mathematical representation of a player\rq{}s fortune in a fair game. Simply, the expected fortune at some later time is equal to the current fortune, e.g., $\mathbb{E}[W(t + dt)|\mathcal{F}_t] = W(t)$, where $\mathbb{E}[\cdot|\mathcal{F}_t]$ represents the conditional expectation given $\mathcal{F}_t$. Since Brownian motion allows negative values, it was then replaced with a geometric form by Samuelson in 1965~\cite{Samuelson_1965}, called \emph{geometric Brownian motion (GBM)}, where the proportional price changes are exponentially generated by a Brownian motion. It satisfies a stochastic differential equation and has an explicit solution by checking It\^{o}'s stochastic calculus (also called \emph{It\^{o} Lemma})~\cite{Shreve_2004_2}. Based on a GBM, Black and Scholes constructed a replicating portfolio for an option and proposed an option pricing method in 1973~\cite{Black_1973}. In the same year, Merton discussed a similar idea to price an option~\cite{Merton_1973}. Their seminal contributions revolutionized the financial industry and spurred the research in this area. Generally, research on options can be classified into four directions~\cite{Sundaresan_2000,Hobson_2004}: (i) complex underlying stochastic models; (ii) valuation of exotic options; (iii) numerical pricing approaches; and (iv) transaction cost models. Our research in this paper is based on the developments of the first three directions, and several related studies are reviewed as follows.

\begin{figure*}[t]
\centering
\includegraphics[width=0.90\textwidth]{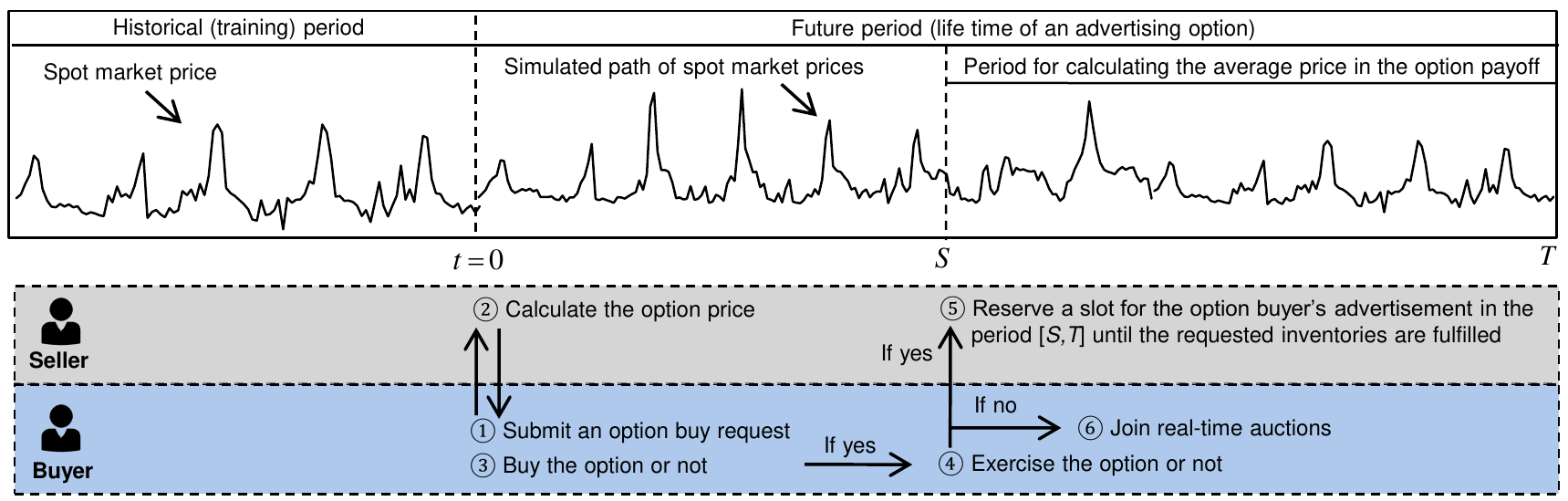}
\caption{Schematic view of buying and exercising an average price advertising option for online advertising.}
\label{fig:advertising_option_system_design}
\vspace*{-12pt}
\end{figure*}

In this paper, we discuss an exotic option tailored to the unique environment of online advertising. Exotic options have been traded for many years in financial markets since the 1980s. In finance, the average price options are one popular type of exotic options, also called \emph{Asian options}~\cite{Zhang_1998}, whose payoff is determined by the average value of prices over a pre-specified period of future time. Therefore, average price options are path-dependent. This is different from the path-independent options such as European options and American options~\cite{Shreve_2004_2}, where the option payoff is calculated for the price at exercise (i.e., on or prior to the option expiration date). The average price options can be divided into two sub-groups: fixed exercise price and floating exercise price. Our proposed average price advertising options are in the former group, where the exercise price is fixed and the random variable is the average underlying price. Several works on of average price options in financial studies are worth mentioning here. A pricing method for financial options whose payoffs are based on a geometric mean was discussed in~\cite{Curran_1994}; an option pricing model for the arithmetic mean case was explored in~\cite{Rogers_1995}; and the power mean option payoff was then discussed in~\cite{Zhang_1998}. These studies offer solid analytical fundamentals for our research. However, they all assume the underlying price movement follows a GBM so that the price needs to be continuous and there are no spikes and jumps. 

%
%
%

Various stochastic processes have been developed over the years for option pricing. Here we focus on jump-diffusion processes which are driven by a Brownian motion and a jump component. Merton~\cite{Merton_1976} proposed a simple stochastic differential equation of jump-diffusion processes, in which jumps are modeled as Poisson events and jump sizes follow a log-normal distribution. Merton's framework was adopted by many other studies and jump sizes can follow different distributions~\cite{Kou_2002,Tsay_2005} which we will discuss in details in Section~\ref{sec:jump_size_dist}. Seasonality has been recently discussed in pricing commodity options like soybeans because price movements in commodity markets often exhibit significant seasonal patterns. They extended the basic GBM structure by incorporating a seasonal behavior parameter which: (i) is determined by a specific function such as sine and cosine
functions~\cite{Barbu_2003}; or (ii) follows another stochastic process driven by an independent randomness~\cite{Richter_2002}; or (iii) or both~\cite{Jin_2010,Backa_2013}. However, these studies didn not incorporate a jump component due to the possible complex model structure. Monte Carlo simulation have been used in option pricing~\cite{Glasserman_2000}. Our this paper develops a method for the exact simulation of continuous-time processes at a discrete set of time steps. The exact simulation means that the joint distribution of the simulated values coincides with the joint distribution of the continuous-time process.

The concept of advertising option was initially proposed in~\cite{Moon_2010}, where a buyer is allowed to make the choice of payment after winning a campaign at either CPM or CPC in the future. This option design is similar to an \emph{option paying the worst and cash}~\cite{Zhang_1998} and the option price is determined by a Nash bargaining game between the buyer and the seller. The first advertising option that allows an advertiser to secure his targeted inventories was discussed in~\cite{Wang_2012_1}. It is a simple European advertising option that considers buying and non-buying the future impressions, and whose price is calculated based on a single-period binomial lattice from a risk-averse publisher's perspective who wants to hedge his expected revenue. This research was then further developed into a multi-period case in~\cite{Chen_2015_2}. Since GBM is not always valid empirically, a stochastic volatility (SV) underlying model was also discussed. In~\cite{Chen_2015_1}, an advertising option with multiple underlying variables following a multivariate GBM was proposed for sponsored search whereby a buyer can target a set of candidate keywords for a certain number of total clicks in the future. Each candidate keyword can also be specified with a unique fixed CPC and the option buyer can exercise the option multiple times at any time prior to or on the its expiration date. This design is a generalisation of the \emph{dual-strike call option}~\cite{Zhang_1998} and the \emph{multi-exercise option}~\cite{Marshall_2012}. Due to the contingent nature of advertising options, they are able to provide greater flexibility to advertisers in their guaranteed deliveries. Our study in this paper is one of the very first studies that discusses contingent payment in online advertising, and the two limitations of the previous studies are addressed by employing a different payoff function (i.e., path-dependent structure) and a different underlying framework (i.e., jump-diffusion stochastic models).

\section{Average Price advertising Options}
\label{sec:advertising_options}

Fig.~\ref{fig:advertising_option_system_design} illustrates the basic concepts, transaction procedures and usage of the proposed average price advertising option in display advertising. A similar scenario can be easily drawn for sponsored search. We assume that a university\rq{}s School of Computer Science creates a new degree program Master of Science in Machine Learning, and is interested in displaying the program\rq{}s banner advertisement online for six months prior to the start of recruitment. As an advertiser, the School will join real-time bidding (RTB) to get the attention of the targeted potential student applicants. However, the banner advertisement can not be guaranteed to be displayed for the needed number of times because the advertising budget is given while the cost of campaigns is uncertain in RTB. To secure the needed future advertising exposure within budget constraint, the School can purchase an advertising option today which gives it a right in the future to obtain the needed impressions at a fixed payment. 

The purchase process of an advertising option has three steps. Step 1 is the School submits a buy request of an advertising option to a publisher at the present time $0$. The request includes: (i) the needed number of impressions from targeted consumers; (ii) the exercise price (i.e., the fixed CPM); and (iii) the future period that the option can be exercised, denoted by $[S,T]$, where $S$ is the time that option can be exercised and $T$ is the option expiration time. Step 2 is the publisher calculates how much to charge the guarantee service upfront for the submitted buy request. This process is called \emph{option pricing} (or \emph{option valuation}). Step 3 is the School decides whether to pay the calculated option price to buy the option. 

We assume the School pays the option price and purchases the option. If the option is exercised by the School at time $S$, the publisher will reserve the specified impressions for the School until the needed number of impressions is fulfilled or the option expires. The School will pay each impression at the pre-specified exercise price. If the option is not exercised, the publisher will not reserve any impressions for the School. Its cost is just the option price so it can still use the remaining budget to join RTB if auctions are preferred in the future. 


\section{Notations and Model Setup}
\label{sec:notations_and_setup}

Both discrete and continuous time notations are used in our model presentations. Discrete time points are denoted by $t_0, \cdots, t_{n}$ where $t_0$ is the present time, $t_{n}$ is the advertising option expiration time, $[t_0, t_n]$ is the life time of the option, $t_{\widetilde{m}}$ is the time that the option buyer can make a decision to exercise the option or not, $[t_{\widetilde{m}}, t_{\widetilde{m} + m}]$ is the period used to calculate the average price in the option payoff and it is also the period that the seller will deliver the requested inventories if the option is exercised at time $t_{\widetilde{m}}$. We denote the continuous time by $t$, $0 \leq t \leq T$. The discrete and continuous time notations have the following relationships: $t_0 = 0$, $t_{\widetilde{m}} = S$, and $t_n = t_{\widetilde{m} + m} = T$. 

\subsection{Jump-Diffusion Stochastic Process}
\label{sec:jump_diffusion_models}

The spot market price of an inventory from a specific advertising slot or from a targeted group of consumers at time $t$ is denoted by $X(t)$. As mentioned earlier, the inventory can be an impression in display advertising or a click in sponsored search. Therefore, $X(t)$ can be expressed as either CPM or CPC. It is the average payment price of the same inventory from the corresponding advertising auctions. Mathematically, the evolution of $X(t)$ can be described by a stochastic process $\{X(t), t \geq 0\}$, which is defined under a filtered probability space $\big(\Omega, \mathcal{F}, \{\mathcal{F}_t\}_{t\geq 0}, \mathbb{P} \big)$, where $\Omega$ is the sample space defining the set of real values that $X(t)$ can take, $\mathcal{F}$ is a collection of subsets of $\Omega$, $\mathbb{P}$ specifies the probability of each event in $\mathcal{F}$, and $\{\mathcal{F}_t\}_{t\geq 0}$ is a filtration satisfying $\mathcal{F}_s \subset \mathcal{F}_t$ for any $0 \leq s < t$. Hence, $X(t)$ is $\mathcal{F}_t$-measurable~\cite{Shreve_2004_2}. 

Given a filtered probability space $\big(\Omega, \mathcal{F}, \{\mathcal{F}_t\}_{t\geq 0}, \mathbb{P} \big)$, $X(t)$ can be modeled by the following stochastic differential equation
\begin{equation}
\label{eq:sde_jump_diffusion}
\frac{d X(t)}{X(t^-)} = 
\underbrace{\mu d t + \sigma d W(t)}_{
\begin{tabular}{c}
\textrm{Continuous}\\ 
\textrm{component}
\end{tabular}} +  \underbrace{d \bigg( \sum_{i=1}^{N(t)} (Y_i - 1) \bigg) }_{
\begin{tabular}{c}
\textrm{Discontinuous}\\
\textrm{component}
\end{tabular}
},
\end{equation}
where $\mu$ is a constant drift, $\sigma$ is a constant volatility, $W(t)$ is a Brownian motion, $X(t^-)$ stands for the value of the spot market price just before a jump at time $t$ if there is one, $N(t)$ represents the arrival of price jumps which is a homogeneous Poisson process with intensity $\lambda$ so that
\begin{align*}
& \mathbb{P}(\textrm{Price jumps once in } dt) = \lambda dt + \mathcal{O}(dt),\\
& \mathbb{P}(\textrm{Price jumps more than once in } dt) = \mathcal{O}(dt),\\
& \mathbb{P}(\textrm{Price does not jump in } dt) = 1 - \lambda dt + \mathcal{O}(dt),
\end{align*}
where $\mathcal{O}(dt)$ is the asymptotic order symbol, and $\{Y_i, i = 1, 2, \cdots\}$ is a sequence of independent and identically distributed (i.i.d.) non-negative variables representing the jump sizes. In the model, all sources of randomness, i.e., $N(t)$, $W(t)$, and $Y_i$, are assumed to be independent. 

There are two major components in Eq.~(\ref{eq:sde_jump_diffusion}). The continuous component is as same as a GBM, in which the drift term represents the expected instantaneous change rate and the volatility term represents the small fluctuation or vibration of price. It has the Markov property. The discontinuous component is driven by a compound Poisson process, which accounts for the unusual or \lq\lq{}abnormal\rq\rq{} extreme changes due to the arrival of some signals in the market. It can also be used to describe the cyclical pattern in price movement. Here we simply explain the term $Y_i - 1$. Let's consider what happens with the spot market price when a jump occurs at time $t_i$, we have
\[
\frac{d X(t_i)}{X(t_i^-)} = \frac{X(t_i) - X(t_i^-)}{X(t_i^-)} = Y_i -1.
\]
Hence, $Y_i =  X(t_i) / X(t_i^-)$, and $Y_i \geq 0$ as $X(t) \geq 0$ for $t \geq 0$. 

Eq.~(\ref{eq:sde_jump_diffusion}) can be solved using It\^{o} stochastic calculus~\cite{Shreve_2004_2}. We now discuss the key steps to the solution. Let $t_1 < t_2 < \ldots $ be the jump times of $N(t)$. For $0 = t_0 \leq t < t_1$, Eq.~(\ref{eq:sde_jump_diffusion}) becomes a GBM because of no jump occurs. By checking It\^{o} Lemma, we obtain
\[
d \ln\{X(t)\} = (\mu - \frac{1}{2} \sigma^2) dt + \sigma d W(t), 
\]
Taking integral of both sides of the equation then gives
\[
X(t) = X(0) \exp \bigg\{(\mu - \frac{1}{2} \sigma^2) t + \sigma W(t) \bigg\}. 
\]
For $t_1 \leq t < t_2$, the solution is similar and we just need to multiply the price with the first jump size $Y_1$. Following the same procedure, the solution to Eq.~(\ref{eq:sde_jump_diffusion}) can be obtained
\begin{align}
X(t) 
= & \ 
X(0) \exp \bigg\{ (\mu - \frac{1}{2} \sigma^2) t + \sigma W(t) \bigg\} \prod_{i=1}^{N(t)} Y_i,
\label{eq:solution_jump_diffusion}
\end{align}
where $\prod_{i=1}^{0} = 1$. It is also an \emph{exponential L\'{e}vy model}~\cite{Cont_2002}. 

\subsection{Jump Size Distributions}
\label{sec:jump_size_dist}

The jump-diffusion stochastic process discussed in Eq.~(\ref{eq:sde_jump_diffusion}) can give several different jump-diffusion stochastic models depending on the jump size distribution. We now discuss three popular choices. 

The first jump size distribution was proposed by Merton~\cite{Merton_1976}, where the logarithm of jump size $V_i = \ln\{Y_i\}$ follows a normal distribution $\mathbf{N}(\alpha, \beta^2)$. This implies 
\begin{align}
\mathbb{E}[e^{V_i}] = & \ e^{\alpha + \frac{1}{2}\beta^2}.
\end{align}

The second jump size distribution was proposed by Kou~\cite{Kou_2002}, where $V_i$ follows an \emph{asymmetric double exponential} distribution, denoted by $\mathbf{ADE}(\eta_1, \eta_2, p_1, p_2)$. Its probability density function is
\[
f_V(v; \eta_1, \eta_2, p_1, p_2) = p_1 \eta_1 e^{-\eta_1 v} \mathbb{I}_{\{v \geq 0\}} + p_2 \eta_2 e^{\eta_2 v} \mathbb{I}_{\{v < 0\}},
\]
where $\mathbb{I}$ is an indicator function, $p_1$ and $p_2$ represent the probabilities of upward and downward jumps (so that $p_1, p_2 \in [0, 1]$ and $p_1 + p_2 = 1$), $\eta_1 > 1$ and $\eta_2 > 0$ are model parameters. The condition $\eta_1 > 1$ is to ensure that $\mathbb{E}(Y_i) < \infty$ and
$\mathbb{E}(X(t)) < \infty$. The model can be rewritten as the combinations of exponentially distributed variables:
\[
V_i =  
\begin{cases}
\varpi_1, & \textrm{with probability } p_1, \\
-\varpi_2, & \textrm{with probability } p_2, 
\end{cases}
\]
where $\varpi_1 \sim \textbf{EXP}(\eta_1)$ and $\varpi_2 \sim \textbf{EXP}(\eta_2)$. Therefore
\begin{align}
\mathbb{E}[e^{V_i}] = & \ p_1 \frac{\eta_1}{\eta_1 - 1} + p_2  \frac{\eta_2}{\eta_2 + 1}.
\end{align}

The third jump size distribution is a special case of Kou's model, where $p_1 = p_2 = \frac{1}{2}$, $\eta_1 = \eta_2 = \frac{1}{\eta}$ and $V$ has the mean $\varrho$. Therefore, $V_i$ follows a \emph{Laplacian} distribution $\mathbf{LAP}(\varrho, \eta)$, and its probability density function is
\begin{align*}
f_V(v; \varrho, \eta) 
= & \ \frac{1}{2 \eta} \exp \bigg\{ - \frac{|v - \varrho|}{\eta} \bigg\}.
\end{align*}
We then have
\begin{align}
\mathbb{E}[e^{V_i}] = & \ \frac{e^{\varrho}}{1-\eta^2}.
\end{align}
For finite samples, the Laplacian distribution is very similar to the Student-t distribution. However, the latter is more tractable analytically and can generate a higher probability concentration such as higher peak around its mean~\cite{Tsay_2005}.

\subsection{Power Mean Option Payoff}
\label{sec:general_mean_option_payoff}

The gain or loss of the buyer of an advertising option in the future is measured by the option payoff. In this paper, the option payoff $\Phi(\mathbf{X})$ is defined as follows

\begin{align}
\Phi(\mathbf{X}) 
= & \
\theta \ \Bigg(
\frac{\widetilde{c}}{c} 
\underbrace{ 
\bigg(\frac{1}{m} \sum_{i = \widetilde{m} + 1}^{\widetilde{m} + m} X_i^\gamma \bigg)^{
\frac{1}{\gamma}}
}_{
\begin{tabular}{l}
$:= \psi(\gamma|\mathbf{X})$ \\
\end{tabular}
}
 - K \Bigg)^+,
\label{eq:option_payoff}
\end{align}
where $(\cdot)^+ := \max\{\cdot, 0\}$, $\theta$ is the requested number of impressions or clicks, $c$ is the CTR of the option buyer's advertisement, $\widetilde{c}$ is the average CTR of relevant or similar advertisements, $K$ is the exercise price which can be a fixed CPM or CPC depending on the advertising type, $\mathbf{X}$ represents a vector of the spot market prices in the future period $[S,T]$, and $\big(\frac{1}{m} \sum_{i = \widetilde{m} + 1}^{\widetilde{m} + m} X_i^\gamma \big)^{1/\gamma}$ is the power mean of these prices.

\begin{table}[t]
\centering
\caption{Special and limiting cases of the power mean.}
\label{tab:general_mean_special_cases}
\begin{tabular}{l|l|l}
\hline
$\gamma$ & $\psi(\gamma|\mathbf{X})$ & Description\\
\hline
$-\infty$ 
& 
\parbox{1.5in}{
$\min\big\{X_{\widetilde{m}+1}, \cdots, X_{\widetilde{m} + m}\big\}$ 
}
& 
Minimum value \\[0.07in]
$-1$ 
&\parbox{1.5in}{
$m/\big(\frac{1}{X_{\widetilde{m}+1}} + \cdots + \frac{1}{X_{\widetilde{m}+m}}\big)$
} 
& 
Harmonic mean \\[0.07in]
$0$ 
&
\parbox{1.5in}{
$\big(\prod_{i=\widetilde{m}+1}^{\widetilde{m} + m} X_i \big)^{\frac{1}{m}}$ 
}
& 
Geometric mean\\[0.07in]
$1$ 
& 
\parbox{1.5in}{
$\frac{1}{m} \sum_{i=\widetilde{m} + 1}^{\widetilde{m} + m} X_i $
}
& 
Arithmetic mean\\[0.07in]
$2$ 
&
\parbox{1.5in}{
$\big(\frac{1}{m} \sum_{i=\widetilde{m} + 1}^{\widetilde{m} + m} X_i^2 \big)^{\frac{1}{2}} $
} 
& 
Quadratic mean\\[0.07in]
$\infty$ 
& 
\parbox{1.5in}{
$\max\big\{X_{\widetilde{m}+1}, \cdots, X_{\widetilde{m}+m}\big\}$
} 
& 
Maximum value\\
\hline
\end{tabular}
\vspace*{-10pt}
\end{table}

The following points are worth noting. First, there are $m$ future spot market prices in the period $[S, T]$, indexed from $\widetilde{m}+1$ to $\widetilde{m} + m$. The index is just to make our notation presentation to be consistent with the previous sections. Second, the term $\widetilde{c}/c$ adds quality effects on the power mean and converts the spot market prices in the period $[S,T]$ into the option buyer's own cost. We can simply consider it as the average payment if he participates in real-time auctions in the same period. Third, if we do not consider CTRs in display advertising, both $c$ and $\widetilde{c}$ can be set to 1. This will not change the option pricing results discussed in the later sections. Fourth, the power mean (also called the \emph{general mean} or the \emph{H\"{o}lder mean}) can be treated as a continuous function of $\gamma$, denoted by $\psi(\gamma|\mathbf{X})$. As shown in Table~\ref{tab:general_mean_special_cases}, it includes the arithmetic mean, the harmonic mean, the quadratic mean and the geometric mean as special cases, and the maximum and minimum observations as limiting cases. Therefore, our option pricing results are a generalization of many previous option pricing models. Details will be discussed in Section~\ref{sec:option_pricing_methods}. In addition, the power mean $\psi(\gamma|\mathbf{X})$ is monotonically increasing. That is, if $\gamma_1 \leq \gamma_2$, then $\psi(\gamma_1|\mathbf{X}) \leq \psi(\gamma_2|\mathbf{X})$. For the detailed proof, see Theorem 6.1 in~\cite{Zhang_1998}. The geometric mean $\psi(0|\mathbf{X})$ is log-normally distributed whereas other means are not. This is an important property so an explicit solution of the option price can be obtained by Theorem~\ref{thm:option_pricing} in Section~\ref{sec:option_pricing_methods}.

\subsection{Arbitrage-Free Condition}
\label{sec:risk_neutral_probability}

The concept of arbitrage is the corner-stone of option pricing theory in finance. Although online advertising and finance have many differences and there is no common marketplace for trading advertising options at the moment, we still believe that ruling out arbitrage opportunities is important and necessary. \emph{Arbitrage} means that an investor can take advantage of a price difference between two or more markets to make profits without taking any risk of loss~\cite{Delbaen_2011}. Simply, his gain happens with probability 1. Arbitrage is the situation that investors can have a \lq\lq{}free-lunch\rq\rq{}. As a consequence, markets will not reach equilibrium. Therefore, it is reasonable to assume there are no arbitrage opportunities among markets so pricing an advertising option needs to be arbitrage-free. 

As previously mentioned in Section~\ref{sec:related_work}, martingale is the mathematical representation of a player\rq{}s fortune in a fair game. To rule out arbitrage opportunities, the discounted spot market price should be a martingale. However, it is not a martingale in the real world. We therefore change the real-world probability measure $\mathbb{P}$ to another \emph{equivalent probability measure} $\mathbb{Q}$ which makes the discounted spot market price a martingale for the purpose of pricing, that is, $\mathbb{E}^{\mathbb{Q}}[e^{-rt}X(t)|\mathcal{F}_0] = X(0)$. We follow the naming convention in finance and call it the \emph{risk-neutral probability}. The idea is best explained by the looking at the following example. We assume there is a demand-side agent who buys impressions for advertisers in display advertising. Here we look at an agent but not an advertiser because we assume that an agent's decision making can be determined by his monetary gains or losses. As guaranteed contracts are usually negotiated privately and there is no disclosed price for a standard guaranteed contract, let's simply consider that a guaranteed contract which specifies a single future impression is sold at its spot market price. In theory, this is achievable in online advertising under the continuous-time setting because the agent can keep buying or selling impressions from real-time auctions over time. At the present time $0$, the agent can have a strategy of borrowing $X(0)$ money from a bank and buying the guaranteed contract. His gain or loss is $\Psi(0) = 0$. If we assume the price space $\Omega = \{\omega_1, \omega_2, \cdots, \}$, then at the future time $t$, the agent can sell the impression to an advertiser at $X(t, \omega)$, and his gain or loss is $\Psi(t, \omega) = X(t, \omega) - X(0)e^{rt}$, where $r$ is a constant continuously compounded interest rate. The term $- X(0)e^{rt}$ represents that the option buyer pays the borrowed money together with the interest back to the bank. Mathematically, arbitrage can be spotted if the following conditions are satisfied: (i) $\Psi(0) = 0$; (ii) $\Psi(t, \omega) \geq 0$ for all $\omega \in \Omega$; (iii) $\Psi(t, \omega) > 0$ for at least one $\omega \in \Omega$. It is not difficult to see that with the risk-neutral probability measure $\mathbb{Q}$, arbitrage opportunities do not exist.

Below we discuss a simple way to find the solution of Eq.~(\ref{eq:sde_jump_diffusion}) under the risk-neutral probability measure $\mathbb{Q}$. 
\begin{align}
 & \ 
\mathbb{E} \bigg[\frac{X(t)}{X(0)}\bigg]
= \mathbb{E} \bigg[\exp\bigg\{ \ln \Big\{\frac{X(t)}{X(0)}\Big\} \bigg\} \bigg] \nonumber\\
= & \ 
\mathbb{E}\bigg[\exp \bigg\{(\mu - \frac{1}{2}\sigma^2)t + \sigma W(t) +\sum_{i}^{N(t)} V_i \bigg\}\bigg] \nonumber\\
= & \ 
\underbrace{
\mathbb{E}\bigg[\exp \bigg\{(\mu - \frac{1}{2}\sigma^2)t + \sigma W(t) \bigg\}\bigg]
}_{
\begin{tabular}{c}
$= e^{\mu t}$
\end{tabular}
} \underbrace{\mathbb{E}\bigg[\exp\bigg\{\sum_{i}^{N(t)} V_i \bigg\}\bigg]}_{
\begin{tabular}{c}
$:= \Lambda$
\end{tabular}
}.
\end{align}

Let $\delta(t)= \sum_{i}^{N(t)} V_i$, then it is a compound Poisson process and $\Lambda$ is the moment generating function (MGF) of $\delta(t)$ at the value $1$. We have
\begin{align}
\Lambda 
= & \ M_{\delta(t)}(1) && \hspace*{-60pt} \triangleright M(\cdot) \textrm{ is a MGF} \nonumber\\
= & \ \sum_{j} e^j \mathbb{P}(\delta(t)= j) \nonumber\\
= & \ \sum_{j} e^j \bigg(\sum_{k} \mathbb{P}\Big(\delta(t)= j | N(t) = k \Big) \mathbb{P}(N(t) = k)\bigg) \nonumber\\
= & \ \sum_{k} \mathbb{P}(N(t) = k) \bigg( \sum_{j} e^j \mathbb{P}\Big(\delta(t)= j | N(t) = k \Big) \bigg) \nonumber\\
= & \ \sum_{k} \mathbb{P}(N(t) = k) \bigg( \sum_{j} e^j \mathbb{P} \bigg( \sum_{i=1}^{k} V_i = j \bigg) \bigg) \nonumber\\
= & \ \sum_{k} \mathbb{P}(N(t) = k) M_{(\sum_{i=1}^{k} V_i)}(1) \nonumber\\
= & \ \sum_{k} \mathbb{P}(N(t) = k) \prod_{i=1}^k \mathbb{E}[e^{V_i}] \nonumber\\
= & \ \sum_{k = 0}^{\infty} \frac{e^{-\lambda t} (\lambda t)^k}{k!} \big( \mathbb{E}[e^{V_i}] \big)^k && \hspace*{-35pt} \triangleright \textrm{ i.i.d. } V_i \nonumber \\
= & \ e^{\lambda t \big(\mathbb{E}[e^{V_i}] - 1 \big)}.
\end{align}

Comparing $\mathbb{E} [\frac{X(t)}{X(0)}]$ and $e^{rt}$ gives $\mu = r - \lambda \big(\mathbb{E}[e^{V_i}] - 1 \big)$. Hence, the solution to Eq.~(\ref{eq:sde_jump_diffusion}) under the risk-neutral probability measure $\mathbb{Q}$ is 
\begin{equation}
\label{eq:solution_risk_neutral}
X(t) = X(0) \exp \bigg\{(r - \lambda \zeta - \frac{1}{2}\sigma^2) t + \sigma W(t) \bigg\}  \prod_{i=1}^{N(t)}Y_i,
\end{equation}
where $\zeta := \mathbb{E}[e^{V_i}] - 1$, and its detailed calculation is given in Table~\ref{tab:measure_change}. 

\begin{table}[t]
\centering
\caption{Calculation of $\zeta$.}
\label{tab:measure_change}
\begin{tabular}{l|l|l}
\hline
Distribution of $Y_i$ & Distribution of $V_i$ & $\zeta$\\
\hline
Log-normal
&
$\mathbf{N}(\alpha, \beta^2)$ 
&
$e^{\alpha + \frac{1}{2}\beta^2}-1$
\\
Log-ADE
&
$\mathbf{ADE}(\eta_1, \eta_2, p_1, p_2)$
&
$p_1 \frac{\eta_1}{\eta_1 - 1} + p_2  \frac{\eta_2}{\eta_2 + 1}-1$
\\
Log-laplacian
&
$\mathbf{LAP}(\varrho, \eta)$ 
&
$\frac{e^{\varrho}}{1-\eta^2}-1$
\\
\hline
\end{tabular}
\vspace*{-5pt}
\end{table}

\section{Option Pricing}
\label{sec:option_pricing_methods}

We now discuss how to price an advertising option. As mentioned in Section~\ref{sec:risk_neutral_probability}, it is possible to construct a replicated strategy for an agent and use it to price an advertising option. Below we discuss a simple method by employing the concept of net present value (NPV), in which incoming and outgoing cash flows can also be described as option benefit and cost, respectively. The benefit of an advertising option is the expected payoff which represents its buyer's relatively cost reduction, and the option cost is the upfront option price. Therefore, we have
\begin{align*}
& \ \ \textrm{NPV (Option)} = \textrm{PV(Option benefit)} - \textrm{PV(Option cost)}.
\end{align*}
We assume that an advertising option adds no monetary value to both buyer and seller so that $\textrm{NPV} = 0$. Then, the option price can be obtained as follows
\begin{equation}
\label{eq:option_pricing_martingale}
\pi_0 = e^{-r T} \mathbb{E}^{\mathbb{Q}} [\Phi (\mathbf{X}) | \mathcal{F}_0],
\end{equation}
where $\mathbb{E}^\mathbb{Q}[\cdot|\mathcal{F}_0]$ represents the expectation conditioned on the information up to time $0$ under the risk-neutral probability measure $\mathbb{Q}$. The option price is the discounted value of the conditional expectation of the option payoff under the risk-neutral probability measure and the option payoff is based on the average mean of the future spot market prices described by the jump diffusion stochastic process.

\subsection{General Solution}

Algorithm~\ref{algo:option_pricing_mc} presents a general solution to Eq.~(\ref{eq:option_pricing_martingale}) using Monte Carlo simulation~\cite{Glasserman_2000}. The time interval $[S,T]$ has been divided into $m$ equal sub-periods $\Delta t$ for a sufficiently large averaging observation. The steps for the period $[0,S]$ are then $\widetilde{m} = \lceil \frac{S}{\Delta t}\rceil$. Hence, the total number of steps in the period $[0, T]$ is $\widetilde{m}+m$, and we denote $t_i = \frac{i}{\widetilde{m} + m} T$, $i =1, \cdots, \widetilde{m}+m$. Let us do Monte Carlo replications for $z$ times. For each $j = 1, \cdots, z$, run the sub-procedures for $\widetilde{m}+m$ steps. For $i =1, \cdots, (\widetilde{m}+m$), the step $\xi_i$ follows a Bernoulli distribution because in a very small time period, no more than one jump can occur almost surely. The confidence bounds are then $\pi_0 \pm e^{-rT} 1.96 \sigma^{\{\Phi\}}/\sqrt{z}$, where $\sigma^{\{\Phi\}} = \mathrm{std}[\{\Phi^{\{j\}}\}_{j=1}^{z}]$. The bounds can be reduced by either increasing the number of replications $z$ or by reducing the variance of option payoffs. For the latter, several variance reduction techniques~\cite{Glasserman_2000} can be used but we do not further discuss them here.  

\begin{algorithm}[t]
\caption{\textbf{Average price advertising option pricing}}
\label{algo:option_pricing_mc}
\noindent\textbf{Input}: $X(0), r, \sigma, S, T, m, K, c, \widetilde{c}, z, \theta, \gamma,\mathbf{\Upsilon}$\\
\noindent (where $\mathbf{\Upsilon} = \{\alpha, \beta\} \textrm{ or } \{\eta_1, \eta_2, p_1, p_2\} \textrm{ or } \{\varrho, \eta\}$)
\begin{algorithmic}[1]
\State $\Delta t \leftarrow \frac{1}{m}(T-S)$; 
\State $\widetilde{m} \leftarrow \lceil \frac{S}{\Delta t}\rceil$; 
\State $\zeta \leftarrow$ Table~\ref{tab:measure_change};
\For{$\; j \leftarrow 1$ to $z$}
	\State $X_0^{\{j\}} \leftarrow X(0)$;
	\For{$\; i \leftarrow 1$ to $\widetilde{m}+m$}
		\State $a_i \leftarrow \mathbf{N} \big( ( r - \lambda \zeta - \frac{1}{2} \sigma^2) \Delta t, \sigma^2 \Delta t \big)$;
		\State $\xi_i \leftarrow \mathbf{BER}(\lambda \Delta t)$;
		\State $v_i \leftarrow \mathbf{N} (\alpha, \beta^2)$;
		\State {\color{white} $v_i \leftarrow$} or $\mathbf{ADE}(\eta_1, \eta_2, p_1, p_2)$ or $\mathbf{LAP}(\varrho, \eta)$;		
		\State $\ln\{X_{i}^{\{j\}}\} \leftarrow \ln\{X_{i-1}^{\{j\}}\} + a_i + \xi_i v_i$;
	\EndFor
\State $\Phi^{\{j\}} \leftarrow$ Eq.~(\ref{eq:option_payoff});
\EndFor
\State $\pi_0 \leftarrow e^{-r T} \big( \frac{1}{z} \sum_{j = 1}^{z} \Phi^{\{j\}} \big)$.
\end{algorithmic}
\noindent\textbf{Output}: $\pi_0$
\end{algorithm}

\subsection{Special Case}

Theorem~\ref{thm:option_pricing} discusses an explicit solution for the case when $\gamma=0$ and $V_i \sim \mathbf{N}(\alpha, \beta^2)$. It has also two special cases. Firstly, if the price averaging period is very short, we can simply use the terminal spot market price as the average price. In this case, the option price can be calculated by using Merton\rq{}s option pricing model~\cite{Merton_1976}. The second special case is that when the jumps sizes are not significant. For example, the estimated value of parameter $\lambda$ or $\alpha$ is very small from training data. Hence, the discontinuous jump component can be removed and the underlying dynamic then becomes a GBM. This the makes our pricing framework similar to an European geometric Asian call option~\cite{Zhang_1998}.

\begin{theorem}
\label{thm:option_pricing}
If $\gamma = 0$ and $V_i \sim \mathbf{N}(\alpha, \beta^2)$, the option price $\pi_0$ can be obtained by the formula
\begin{align}
\pi_0 
= & \ 
\theta e^{- (r+\lambda) T} \sum_{k=0}^{\infty} \frac{(\lambda T)^k}{k!}
\Bigg( \frac{\widetilde{c}}{c} X(0) \Omega \mathscr{N}(\xi_1) - K \mathscr{N}(\xi_2) \Bigg),
\label{eq:option_pricing_special_case}
\end{align}
where $\mathscr{N}(\cdot)$ is the cumulative standard normal distribution function, and
\begin{align*}
A = & \ \frac{1}{2} (r - \lambda \zeta - \frac{1}{2}\sigma^2) (T+S) + k \alpha,\\
B^2 = & \ \frac{1}{3} \sigma^2 T + \frac{2}{3} \sigma^2 S + k \beta^2, \\
\Omega = & \ e^{\frac{1}{2}(B^2+2A)}, \ \ \ \ \ 
\phi =  \ln \{c K\} - \ln \{\widetilde{c} X(0)\},\\
\xi_1 = & \ B - \frac{\phi}{B} + \frac{A}{B}, \ \ \ \ \ 
\xi_2 = \frac{A}{B} - \frac{\phi}{B}.
\end{align*}
\end{theorem}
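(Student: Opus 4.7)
The plan is to evaluate $\pi_0=\theta e^{-rT}\mathbb{E}^{\mathbb{Q}}[(\tfrac{\widetilde c}{c}G-K)^{+}]$ with $G=\psi(0\mid\mathbf{X})$ by conditioning on $N(T)=k$. Under $\mathbb{Q}$, the log-price from Eq.~(\ref{eq:solution_risk_neutral}) splits into three mutually independent pieces: a deterministic drift, a Brownian term, and the jump sum $\sum_{j=1}^{N(T)}V_j$. Because $V_j\sim\mathbf{N}(\alpha,\beta^{2})$ and the three pieces are independent, conditioning on a fixed $k$ renders $\ln G$ Gaussian, so $G\mid\{N(T)=k\}$ is log-normal and a Black--Scholes-style closed form applies to each term of the resulting Poisson expansion.

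First I would write $\ln G=\tfrac{1}{m}\sum_{i=\widetilde m+1}^{\widetilde m+m}\ln X_{i}$ and pass to the continuous-time average $\tfrac{1}{T-S}\int_{S}^{T}\ln X(t)\,dt$ before splitting it. The drift gives $\tfrac{1}{T-S}\int_{S}^{T}t\,dt=\tfrac{T+S}{2}$, producing the first summand of $A$. For the Brownian integral, $\mathrm{Cov}(W(s),W(t))=\min(s,t)$ and Fubini yield $\int_{S}^{T}\!\!\int_{S}^{T}\min(s,t)\,ds\,dt=(T-S)^{2}(T+2S)/3$, using the factorisation $T^{3}-3TS^{2}+2S^{3}=(T-S)^{2}(T+2S)$; dividing by $(T-S)^{2}$ and multiplying by $\sigma^{2}$ gives the diffusive part $\tfrac{\sigma^{2}T}{3}+\tfrac{2\sigma^{2}S}{3}$ of $B^{2}$. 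Conditional on $N(T)=k$, the jump contribution to the average is a sum of $k$ independent $\mathbf{N}(\alpha,\beta^{2})$ variables, supplying $k\alpha$ to $A$ and $k\beta^{2}$ to $B^{2}$. Independence of the Brownian and jump components then yields $\ln G\mid\{N(T)=k\}\sim\mathbf{N}(\ln X(0)+A,\,B^{2})$.

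The final step is the standard Black--Scholes identity: for $Y\sim\mathbf{N}(\mu,\nu^{2})$ and constants $a,K>0$, completing the square in the Gaussian density gives $\mathbb{E}[(ae^{Y}-K)^{+}]=ae^{\mu+\nu^{2}/2}\mathscr{N}(\nu+(\ln(a/K)+\mu)/\nu)-K\mathscr{N}((\ln(a/K)+\mu)/\nu)$. Applying it with $a=\tfrac{\widetilde c}{c}X(0)$, $\mu=A$, $\nu=B$, and using $\Omega=e^{A+B^{2}/2}$ together with $\phi=\ln(cK)-\ln(\widetilde c X(0))$, rearranges the two arguments into exactly $\xi_{1}=B-\phi/B+A/B$ and $\xi_{2}=A/B-\phi/B$. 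Summing against $\mathbb{P}(N(T)=k)=e^{-\lambda T}(\lambda T)^{k}/k!$ and absorbing $e^{-\lambda T}$ into the $e^{-rT}$ discount produces the $\theta e^{-(r+\lambda)T}\sum_{k\geq 0}(\lambda T)^{k}/k!$ prefactor of Eq.~(\ref{eq:option_pricing_special_case}).

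The main obstacle I anticipate is the bookkeeping of the jump contribution inside the time-averaged log-price: a jump arriving at $\tau\in[S,T]$ strictly enters only the suffix $[\tau,T]$ of the average, so a priori it contributes a time-weighted fraction of $V_{j}$ rather than the full $V_{j}$. The clean appearance of $k\alpha$ and $k\beta^{2}$ in the theorem rests on the convention (implicit in the paper's setup) that each of the $k$ jumps counted by $N(T)$ feeds every averaged price in full; once this convention is fixed, the remainder of the argument reduces to the two routine computations of the Brownian-integral variance and of the Gaussian Black--Scholes integral.
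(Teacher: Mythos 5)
Your proposal is correct and follows essentially the same route as the paper: condition on $N(T)=k$, replace the geometric mean by the continuous-time log-average $\frac{1}{T-S}\int_S^T \ln X(t)\,dt$, identify its conditional Gaussian law with mean $A$ and variance $B^2$, apply the standard lognormal call integral, and sum against the Poisson weights to absorb $e^{-\lambda T}$ into the discount factor. Your direct evaluation of $\int_S^T\int_S^T\min(s,t)\,ds\,dt$ simply replaces the paper's discrete Riemann-sum computation followed by the $m\to\infty$ limit, and the jump-bookkeeping caveat you flag is a genuine feature of the theorem rather than a defect of your argument--the paper's own proof silently places all $k$ jumps before time $S$ (its conditional law for $\ln\{X_{\widetilde{m}}/X_0\}$ carries the full $k\alpha$ and $k\beta^2$, while the increments over $[S,T]$ carry none), which is exactly the convention you identify as needed for $k\alpha$ and $k\beta^2$ to appear unweighted in $A$ and $B^2$.
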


\begin{proof}
The geometric mean $\psi(\gamma=0|\mathbf{X})$ can be rewritten in a continuous-time form 
\[
\psi(\gamma=0|\mathbf{X}) = \exp \Bigg\{ \frac{1}{T - S} \int_{S}^{T} \ln\{X(t)\} dt \Bigg\}, 
\]
then
\[
Z(T) | {N(T)=k} \sim \mathbf{N} \Big((r - \lambda \zeta -\frac{1}{2}\sigma^2)T + k \alpha, \sigma^2 T + k \beta^2 \Big).
\]
Below we show $\psi(0|\mathbf{X})$ is log-normally distributed.
\begin{align*}
  & \ \psi(0 | \mathbf{X}) 
= X_{0}\bigg(\prod_{i=\widetilde{m}+1}^{\widetilde{m}+m} X_i / X_{0}^m \bigg)^{1/m}\\
= & \ X_{0} 
\exp\bigg\{
\frac{1}{m} \ln\Big\{ 
\Big(\frac{X_{\widetilde{m}}}{X_{0}}\Big)^m
\Big(\frac{X_{\widetilde{m}+1}}{X_{\widetilde{m}}}\Big)^m
\hspace{-10pt}
\cdots
\hspace{5pt}
\Big(\frac{X_{\widetilde{m}+m}}{X_{\widetilde{m}+m-1}}\Big)  
\Big\}
\bigg\}
\end{align*}  
Since $\Delta t = \frac{T-S}{m}$, so $\widetilde{m} = \frac{S}{\Delta t} = \frac{S}{T-S} m$, and then 
\begin{align*}
& \ln\Big\{ \frac{X_{\widetilde{m}}}{X_0} \Big\} \bigg| _{N(T) = k} \hspace{-12pt}  \sim \mathbf{N} \Big((r - \lambda \zeta - \frac{1}{2}\sigma^2) S + k \alpha, \sigma^2 S + k \beta^2 \Big),
\end{align*}
and for $i = 0, \cdots, (m-1)$, 
\begin{align*}
& \ln\Big\{ \frac{X_{\widetilde{m}+i+1}}{X_{\widetilde{m} + i}} \Big\} \bigg| _{N(T) = k} \hspace{-20pt} \sim \mathbf{N} \Big((r-\lambda \zeta - \frac{1}{2}\sigma^2)  \Delta t, \sigma^2 \Delta t \Big). 
\end{align*}
Let $\Theta = \frac{1}{T - S} \int_{S}^{T} Z(t) dt$, then $\Theta \big| _{N(T) = k} \sim \mathbf{N} \big(\widetilde{A}, \widetilde{B}^2 \big)$, where 
\begin{align*}
\widetilde{A} = & \ (r - \lambda \zeta - \frac{1}{2}\sigma^2) ( \frac{(m+1)}{m} \frac{T-S}{2} + S ) + k \alpha, \\
\widetilde{B}^2 = & \frac{(m+1)(2m+1)}{6 m^2}\sigma^2(T-S) + \sigma^2 S + k \beta^2.
\end{align*}
If $m \rightarrow \infty$, $\Theta \big| _{N(T) = k} \sim \mathbf{N}\big(A, B^2\big)$, where 
\begin{align*}
A = & \ \frac{1}{2} (r - \lambda \zeta - \frac{1}{2}\sigma^2) (T+S) + k \alpha, \\
B^2 = & \ \frac{1}{3} \sigma^2 T + \frac{2}{3} \sigma^2 S + k \beta^2.
\end{align*}
Hence, the option price can be obtained as
\begin{align*}
\pi_0 
= & \ \theta e^{-r T} \mathbb{E}^{\mathbb{Q}} 
\Bigg[ 
\mathbb{E}^\mathbb{Q}
\bigg[
\bigg( \frac{\widetilde{c}}{c} X_0 e^{\Theta} - K \bigg)^{+}
\Big| \
N(T) = k
\bigg]
\bigg| \mathcal{F}_0 
\Bigg] \nonumber \\
= & \ \theta e^{-r T} \sum_{k = 0}^{\infty} \frac{(\lambda T)^k}{k!}e^{-\lambda T} 
\mathbb{E}^\mathbb{Q}_0
\bigg[
\bigg( \frac{\widetilde{c}}{c} X_0 e^{\Theta} - K \bigg)^{+}
\bigg] \nonumber \\
= & \ \theta  e^{-r T} \sum_{k = 0}^{\infty} \frac{(\lambda T)^k}{k!}e^{-\lambda T} 
\hspace{-5pt}\int_{\phi}^{\infty} 
\hspace{-6pt}\Big( \frac{\widetilde{c}}{c} X_0 e^{\Theta} - K \Big) f(\Theta) d \Theta,
\end{align*}
solving the integral terms then completes the proof. \hfill $\square$
\end{proof}

\begin{table*}[t]
\centering
\caption{Summary of datasets.}
\label{tab:datasets}
\begin{tabular}{l|l|l|l}
\hline
Dataset   & SSP & Google UK & Google US \\
\hline
Advertising type             & Display      & Search 		& Search\\
Auction model     	& SP      		& GSP  			& GSP\\
Advertising position         & NA     		& 1st position\textsuperscript{$\dagger$}  & 1st position\textsuperscript{$\dagger$} \\
Bid quote         			 & GBP/CPM  	& GBP/CPC 	    & GBP/CPC\\
Market of targeted users\textsuperscript{$\ddagger$} & UK 	        & UK	        & US \\
Time period     & 08/01/2013 - 14/02/2013 & 26/11/2011 - 14/01/2013  & 26/11/2011 - 14/01/2013\\
Number of total advertising slots        & 31          & 106      & 141 \\
Data reported frequency           	  & Auction  & Day   & Day  \\
Number of total auctions      & 6,646,643     & NA       & NA  \\
Number of total bids          & 33,043,127    & NA       & NA \\
\hline
\multicolumn{4}{l}{\textsuperscript{$\dagger$}In the mainline paid listing of the SERP. \textsuperscript{$\ddagger$}Market by geographical areas.}
\end{tabular}
\vspace*{12pt}
\caption{Experimental settings and statistical investigation of stylized facts from the training data.}
\label{tab:summary_stylized_facts}
\begin{tabular}{l|l|l|l|l|l|l|l|l}
\hline
\multicolumn{2}{l|}{ }  & \multicolumn{5}{c|}{SSP} & Google UK & Google US\\
\hline
\multicolumn{2}{l|}{Time scale}   & 1 hour & 4 hours & 6 hours & 12 hours & 1 day & 1 day & 1 day\\
\multicolumn{2}{l|}{$\Delta t$}   & 1.1416e-04 & 4.5662e-04 & 6.8493e-04 & 0.0014 & 0.0027 & 0.0027 & 0.0027\\
\multicolumn{2}{l||}{Data size on each advertising slot:} & 
& & & & & &\\
\multicolumn{1}{l}{} & \multicolumn{1}{l|}{Training set} &
60  &  40 & 30 & 20 & 14 & 60 & 60\\
\multicolumn{1}{l}{} & \multicolumn{1}{l|}{Development and test set} &
60  &  20 & 15 & 5 & 1 & 60 & 60\\
\multicolumn{2}{l|}{Number of selected advertising slots} & 
31  & 23 & 22 & 20 & 12 & 106 & 141 \\
\multicolumn{2}{l|}{$S$} & 
0.0034 & 0.0046 & 0.0034 & 0.0041 & 0  & 0.0822 & 0.0822 \\
\multicolumn{2}{l|}{$T$} & 
0.0068 & 0.0091 & 0.0103  &  0.0068  &  0.0027  & 0.1644 & 0.1644 \\
\hline
\multicolumn{2}{l|}{Jumps and spikes:} &
& & & & & &\\
\multicolumn{1}{l}{} & \multicolumn{1}{l|}{Presence\textsuperscript{$\dagger$}} &
100.00\% & 100.00\% & 100.00\% & 100.00\% & 100.00\% & 100.00\% & 100.00\%
\\
\multicolumn{1}{l}{} & \multicolumn{1}{l|}{Jump size} &
9.43 & 6.46 & 4.84 & 2.76 & 1.48 & 8.46 & 9.15
\\
\hline
\multicolumn{2}{l|}{Normality:} &
& & & & & &\\
\multicolumn{1}{l}{} & \multicolumn{1}{l|}{Kolmogorov-Smirnov test\textsuperscript{$\dagger$}} &
12.90\% & 82.61\% & 86.36\% & 85.00\% & 33.33\% & 0.00\% & 0.00\%
\\
\multicolumn{1}{l}{} & \multicolumn{1}{l|}{Shapiro-Wilk test\textsuperscript{$\dagger$}} &
12.90\% & 78.26\% & 27.27\% & 55.00\% & 91.67\% & 0.00\% & 0.00\%
\\
\hline
\multicolumn{2}{l|}{Heavy tails:} &
& & & & & &\\
\multicolumn{1}{l}{} & \multicolumn{1}{l|}{Presence\textsuperscript{$\dagger$}}         &
100.00\% & 65.22\% & 54.55\% & 5.00\% & 41.67\% & 99.06\% & 100\%
\\
\multicolumn{1}{l}{} & \multicolumn{1}{l|}{Kurtois} &
6.78 & 3.58 & 3.39 & 1.75 & 3.01 & 23.17 & 17.37
\\
\hline
\multicolumn{2}{l|}{Autocorrelations:} &
& & & & & &\\
\multicolumn{1}{l}{} & \multicolumn{1}{l|}{Ljung-Box-Q test at lags 5\textsuperscript{$\dagger$}} &
35.48\% & 82.61\% & 0.00\% & 90.00\% & 16.67\% & 16.98\% & 30.50\%
\\
\multicolumn{1}{l}{} & \multicolumn{1}{l|}{Ljung-Box-Q test at lags 10\textsuperscript{$\dagger$}} &
12.90\% & 91.30\% & 90.91\% & 80.00\% & 0.00\% & 12.39\% & 12.00\%
\\
\multicolumn{1}{l}{} & \multicolumn{1}{l|}{Ljung-Box-Q test at lags 15\textsuperscript{$\dagger$}} &
16.13\% & 91.30\% & 90.91\% & 80.00\% & 0.00\% & 11.50\% & 12.67\%
\\
\hline
\multicolumn{2}{l|}{Volatility clustering:} &
& & & & & &\\
\multicolumn{1}{l}{} & \multicolumn{1}{l|}{Ljung-Box-Q test at lags 5 (abs)\textsuperscript{$\dagger$}} &
22.58\% & 17.39\% & 86.36\% & 20.00\% & 0.00\% & 23.01\% & 34.00\%
\\
\multicolumn{1}{l}{} & \multicolumn{1}{l|}{Ljung-Box-Q test at lags 10 (abs)\textsuperscript{$\dagger$}} &
22.58\% & 43.48\% & 90.91\% & 35.00\% & 0.00\% & 14.16\% & 9.33\%
\\
\multicolumn{1}{l}{} & \multicolumn{1}{l|}{Ljung-Box-Q test at lags 15 (abs)\textsuperscript{$\dagger$}} &
19.35\% & 56.52\% & 90.91\% & 35.00\% & 0.00\% &  12.39\% & 8.00\%
\\
\multicolumn{1}{l}{} & \multicolumn{1}{l|}{Ljung-Box-Q test at lags 5 (square)\textsuperscript{$\dagger$}} &
22.58\% & 13.04\% & 72.73\% & 10.00\% & 0.00\% & 20.35\% & 42.67\%
\\
\multicolumn{1}{l}{} & \multicolumn{1}{l|}{Ljung-Box-Q test at lags 10 (square)\textsuperscript{$\dagger$}} &
12.91\% & 13.04\% & 77.27\% & 25.00\% & 0.00\% & 5.31\% & 4.00\%
\\
\multicolumn{1}{l}{} & \multicolumn{1}{l|}{Ljung-Box-Q test at lags 15 (square)\textsuperscript{$\dagger$}} &
6.45\% & 13.04\% & 72.73\% & 25.00\% & 0.00\% & 4.42\% & 5.33\%
\\
\hline
\multicolumn{2}{l|}{Selected advertising slots\textsuperscript{$\ddagger$}}      &
58.06\% & 8.70\% & 0.00\%& 10.00\%  & 100.00\% & 72.64\%  & 75.18\% 
\\
\multicolumn{2}{l|}{}  &
(18)  & (2)  &  (0)   & (2) & (12)  & (77) & (106)\\
\hline
\multicolumn{9}{l}{
\textsuperscript{$\dagger$}The number represent the presence (or acceptance) percentage of advertising slots of the corresponding group.
}\\
\multicolumn{9}{p{6.7in}}{
\textsuperscript{$\ddagger$}Advertising slots with absence of both autocorrelations and volatility clustering are selected for pricing advertising options. The number outside the round brackets represents the percentage of advertising slots which have revenue increase and the number in the round brackets represents the average revenue change of slots under that group.
}\\
\end{tabular}
\vspace*{-10pt}
\end{table*}

\subsection{Properties of the Option Price}

We now discuss what happens to the option price when there is a change to one of the model parameters or factors, with all the other factors remaining fixed. It is obvious that an advertising option becomes more valuable if the present underlying spot market price increases. Since the option payoff measures how does the average spot market price in the future period exceeds the exercise price, the option price will decrease if the exercise price increases. In option pricing, an opportunity cost is involved. This cost depends upon the risk-less bank interest rate, the length of the averaging period and the time to expiration. Increases in any of these three factors will increase the option price. The volatility of underlying spot market prices is a measure of uncertainty in online auctions, which also represents the risk that a publisher or search engine will take if he sells an advertising option. The higher the volatility, the greater will the option price be. The jump related parameters affect the option price in a less clear-cut way. The option price will also increase if more advertising inventories are requested. The market CTR and the option buyer's CTR will consistent with their effects in online auctions. The higher the market CTR, the higher the option price is; the higher the buyer's CTR, the less the option price is. As also noted previously, both CTRs can be set to 1 in display advertising.

\section{Experiments}
\label{sec:experiments}

This section describes our datasets and experimental settings, investigates the statistical properties of spot market prices in real-time advertising auctions, discusses the parameters' estimation for the underlying jump-diffusion stochastic process, and presents the option pricing results and revenue analysis. 

\begin{figure*}[t]
\begin{minipage}{0.49\linewidth}
\centering
\includegraphics[width=1\linewidth]{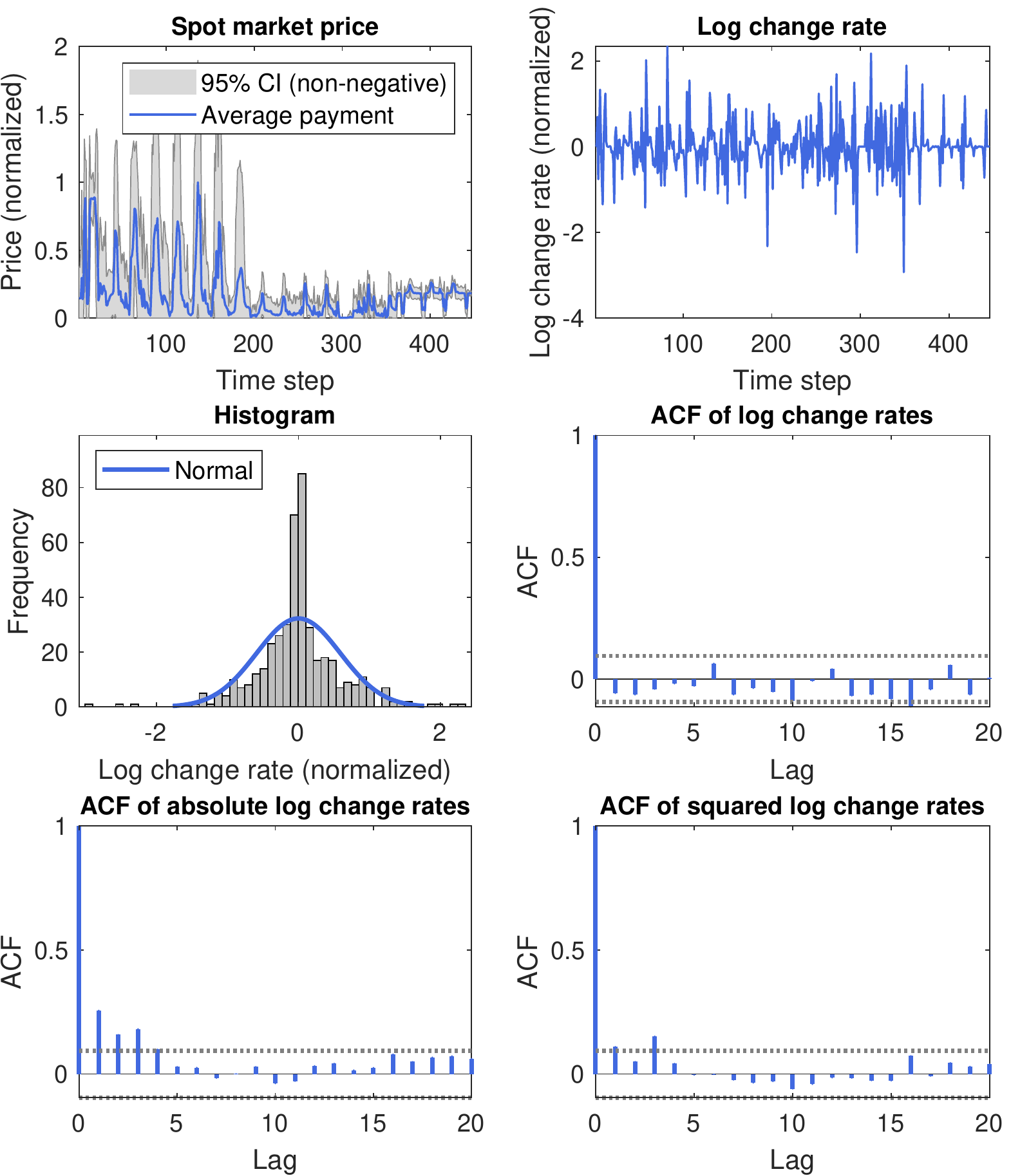}\vspace*{3pt}
\caption{Time series plots and statistical tests of the hourly spot market prices from an advertising slot in the SSP dataset over the period between 08/01/2013 14:00 and 27/01/2013 4:00.}
\label{fig:stylized_fact_ssp}
\vspace*{-8pt}
\end{minipage}
\hspace{8pt}
\begin{minipage}{0.49\linewidth}
\centering
\includegraphics[width=1\linewidth]{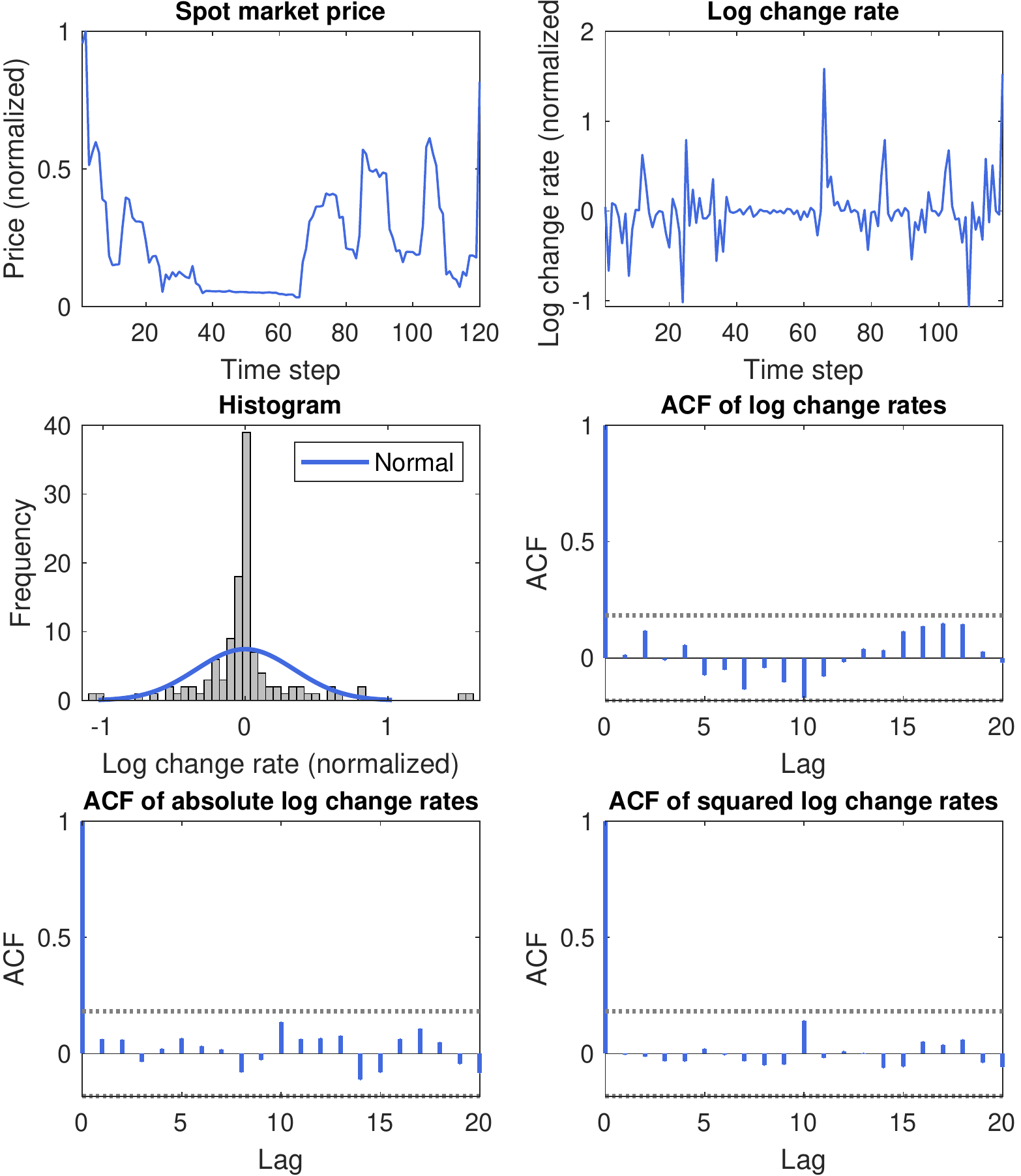}\vspace*{3pt}
\caption{Time series plots and statistical tests of the daily spot market prices of the first position of the mainline paid listing for the keyword \lq\lq{}notebook laptops\rq\rq{} in Google US dataset over the period between 17/03/2012 and 21/09/2012.}
\label{fig:stylized_fact_google}
\vspace*{-8pt}
\end{minipage}
\end{figure*}

\subsection{Data and Experimental Settings}
\label{sec:datasets}

As shown in Table~\ref{tab:datasets}, we use three datasets in experiments: an RTB dataset from a medium-sized supply-side platform (SSP) in the UK and two sponsored search datasets from Google AdWords. The SSP dataset contains 31 advertising slots though we do not know the detailed positions of those slots. Multiple advertising slots (even on a same webpage) are sold separately in RTB through the Second Price (SP) auction model~\cite{Muthukrishnan_2009}. In sponsored search, Google uses keywords to target online users' search queries and use the GSP auction model to sell a list of advertising slots on its search engine result page (SERP) for a query relevant to a specific keyword. Google UK and US datasets report the keyword auctions which target online users from different geographical locations. Google UK dataset contains 106 keywords and Google US contains 141 keywords. As we only look at the first poisiton in the mainline paid listing of SERPs, we could consider the keywords are the unique advertising slots. Our datasets have also been used in several other recent online advertising research. The SSP dataset has been used in~\cite{Chen_2014_2,Chen_2015_2,Yuan_2014,Yuan_2013_2}, and Google\rq{}s datasets have been used in~\cite{Chen_2015_1,Chen_2015_2,Yuan_2012}. The previous studies examined bids in advertising auctions. Spot market prices were extracted from advertising auctions and were used to check the assumptions of the GBM model in~\cite{Chen_2015_1,Chen_2015_2}. In this section, we provide in-depth statistical insights and give a comprehensive investigation of common features of the spot market price. 

Our experimental settings are described in Table~\ref{tab:summary_stylized_facts}. It is worth further explaining two settings here. First, different time scales are used to extract the time series of spot market prices from sequential advertising auctions. As we use the risk-less bank interest rate in the option pricing model, we need to follow the convention of computing time scale in finance. We consider a one-year time period is 1 because the compound risk-less bank interest rate is usually expressed as an annual rate~\cite{Shreve_2004_2}. Therefore, if the time scale $\Delta t$ is chosen as a day, $\Delta t = 1/365 \approx 0.0027$. If the time scale $\Delta t$ is chosen as an hour, $\Delta t = (1/365) \times (1/24) \approx \textrm{1.1416e-04}$. Google's datasets only contain the daily payment prices but we can further extract prices on a smaller time scale (e.g., 12 hours, 6 hours, 4 hours and 1 hour) from the SPP dataset. Second, we randomly select the time period which has consecutively reported data so that the evolution of spot market prices can be analyzed. Also, as described in Section~\ref{sec:notations_and_setup}, the period $[S,T]$ is used to calculate the power mean in the advertising option payoff. If $S=T$, there is only one time point and the option becomes an European call option~\cite{Shreve_2004_2}.

%

\begin{figure}[t]
\centering
\includegraphics[width=1\linewidth]{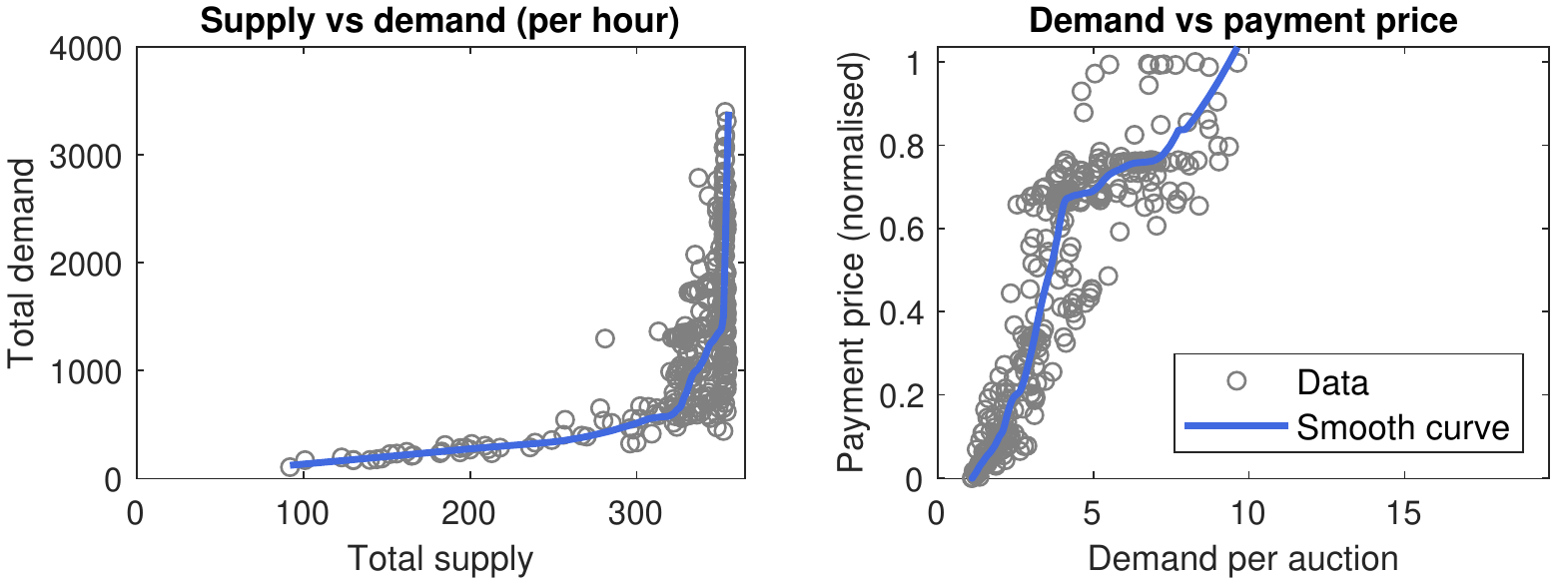}
\vspace*{-7pt}
\caption{Relationships among supply, demand and payment price for an advertising slot in the SSP dataset. The smooth curves are obtained using the 1st degree polynomial locally weighted scatterplot smoothing (LOWESS) method~\cite{Cleveland_1979}.}
\label{fig:supply_demand_ssp}
\vspace*{15pt}
\includegraphics[width=0.75\linewidth]{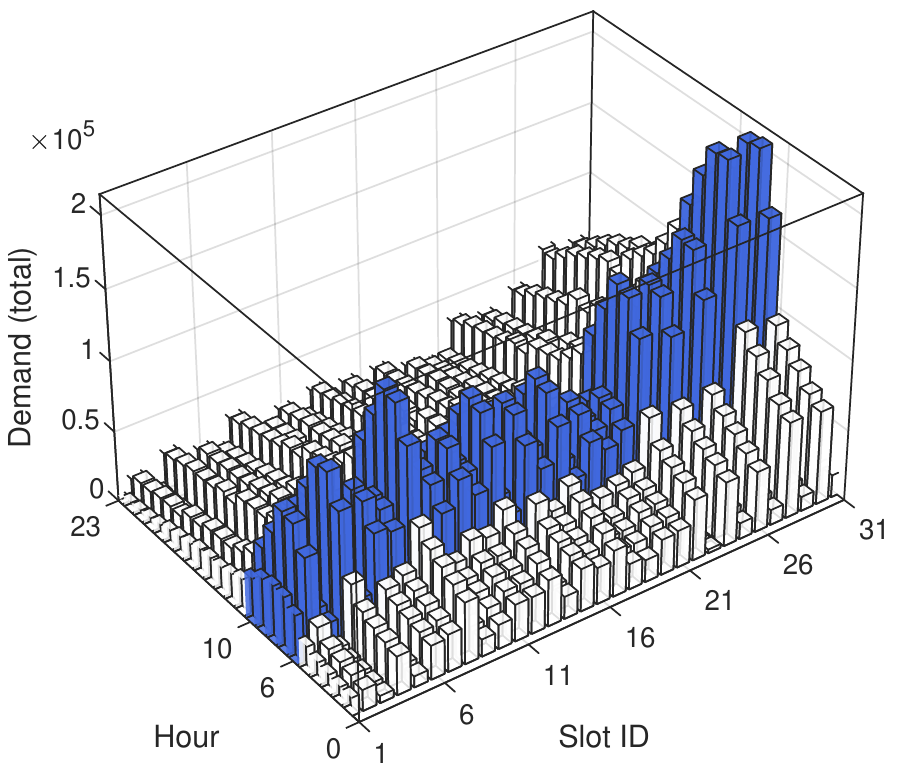}
\caption{Distribution of total advertising demand in the SSP dataset.}
\label{fig:demand_peak_ssp}
\vspace*{-10pt}
\end{figure}

\begin{figure}[t]
\centering
\includegraphics[width=1\linewidth]{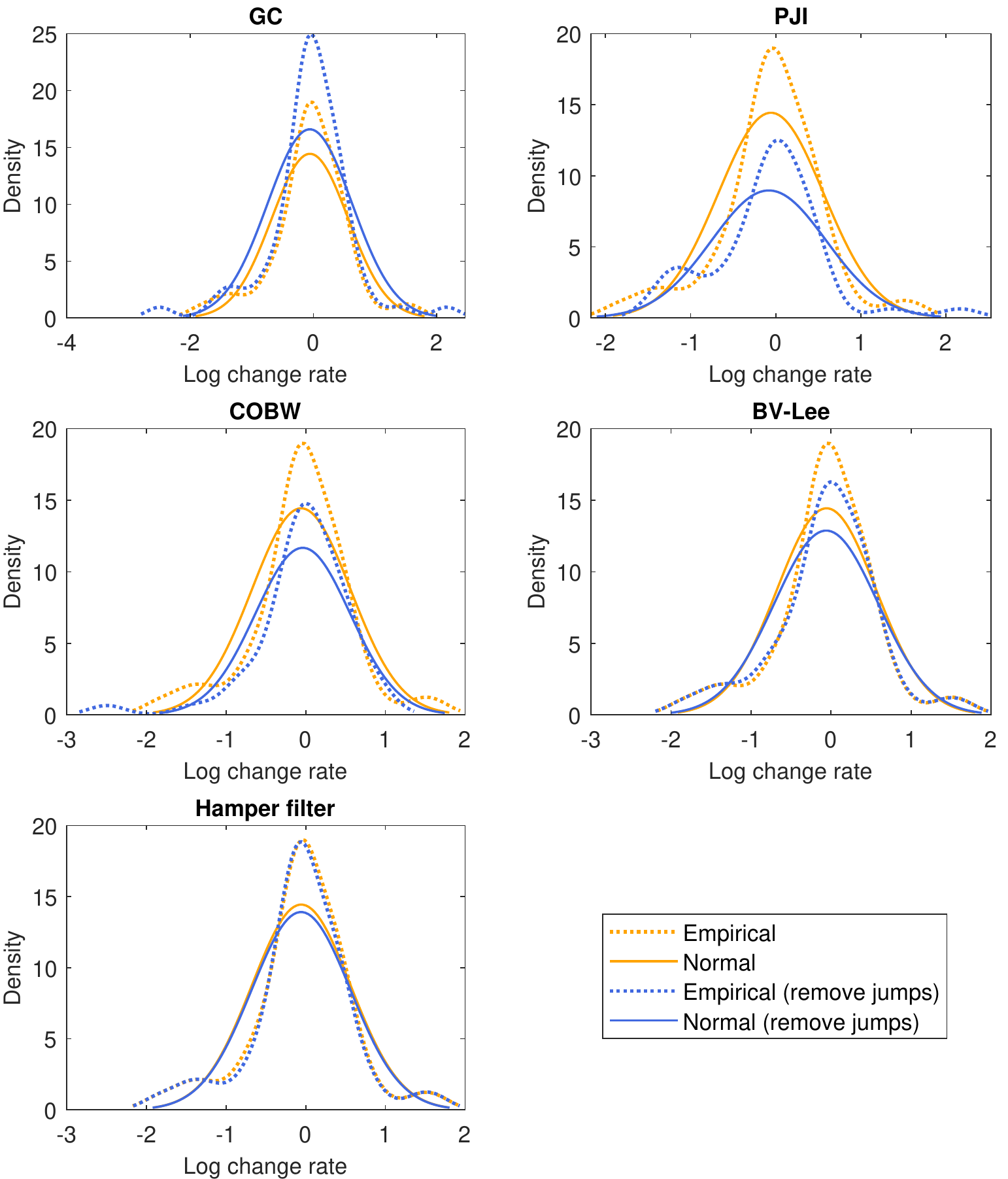}
\caption{Detecting price jumps for an advertising slot in the SSP dataset.}
\label{fig:jumps_detection_example}
\vspace*{-10pt}
\end{figure}

\subsection{Stylized Facts}
\label{sec:stylized_fact}


A set of common features or statistical properties of the spot market price has been identified in our experiments which are known as \emph{stylized facts}~\cite{Sewell_2011}. As $X(t) \geq 0 $, for mathematical convenience, its logarithm is usually analyzed. Given a time scale $\Delta t$, which can range from a few seconds to a day, the log change rate of $X(t)$ at time scale $\Delta t$ is defined as 
\[
R(t, \Delta t) = \ln \{X(t + \Delta t)\} - \ln \{X(t)\}. 
\]
It is also called the \emph{log return} or \emph{continuously compounded return} in time series analysis~\cite{Tsay_2005}. One important reason that the log change rate is used is because it has more tractable statistical properties than the simple rate. For example, a multiple period log change rate can be expressed as a sum of one-period log change rates. 

\textbf{\textsl{Jumps and Spikes}} As shown in Figs.~\ref{fig:stylized_fact_ssp}-\ref{fig:stylized_fact_google}, the spot market prices exhibit sudden jumps and spikes in both display advertising and sponsored search. From a modeling point of view, the price process exhibits a non-Markovian behavior in short time intervals and prices increase or decrease significantly in a continuous way. In our datasets, all advertising slots show price jumps and spikes while they have different jump frequencies. Several jump detection techniques will be discussed in~Section~\ref{sec:para_estimation}. This stylized fact has also been discussed in electricity prices and the typical explanation is a non-linear supply-demand curve in combination with the electricity's non-storability~\cite{Bierbrauer_2007}. This explanation can possibly be applied to online advertising. Fig.~\ref{fig:supply_demand_ssp} gives an example from an advertising slot showing the relationships among supply, demand and payment price. The advertising supply is triggered by ad-hoc web surf or search made by online users, and the demand is the number of media buyers who join advertising auctions. For a specific advertising slot, the competition level in advertising auctions affects the payment price nonlinearly. It is worth noting that the hourly spot market prices from RTB in Fig.~\ref{fig:stylized_fact_ssp} exhibit some cyclical patterns. Fig.~\ref{fig:demand_peak_ssp} explains this by showing the peak hours of the total advertising demand in the SSP dataset. In fact, cyclical bid adjustments were discussed for real-time sponsored search auctions~\cite{Zhang_2011}. Several time series and signal processing methods can be used to decompose or extract the cyclical patterns~\cite{Tsay_2005}. However, we do not further discuss this in this study due to the following reasons. First, such cyclical patterns mainly exist in time series data with updates that occur in less than a day. As shown in Fig.~\ref{fig:stylized_fact_google}, the daily spot market prices do not exhibit cyclical patterns. Second, the proposed jump-diffusion stochastic process is capable of reproducing some cyclical patterns based on the homogeneous Poisson process because jumps and spikes look arrive at a constant rate. Proposing a new model which can accurately incorporate the cyclical patterns as well as being used for pricing advertising options can be an interesting direction of future research.

\textbf{\textsl{Non-Normality and Heavy Tails}} The non-normal character of the unconditional distribution of log change rates has been observed in many advertising slots. Normality can be graphically checked by a histogram or Q-Q plot, and can be statistically verified by hypothesis testing such as the one-sample Kolmogorov-Smirnov (KS) test~\cite{Massey_1951} and the Shapiro-Wilk (SW) test~\cite{Shapiro_1965}. Figs.~\ref{fig:stylized_fact_ssp}-\ref{fig:stylized_fact_google} exhibit that the distributions of log change rates have tails heavier than those of the normal distribution. This is also called \emph{leptokurtic}. One way to quantify the deviation from the normal distribution is checking the kurtosis statistic~\cite{Tsay_2005}. Since the kurtosis of a standard normal distribution is $3$, then the empirical distribution will have a higher peak and two heavy tails if the kurtosis is larger than 3. 

\textbf{\textsl{Absence of Autocorrelations}} Consider whether the future log change rates can be predicted from the current values, we can formulate this question by asking whether they remain stable and whether they are correlated over time~\cite{Cont_2001}. The process is assumed to be weakly stationary so that the first moment and autocovariance do not vary with respect to time. The (linear) autocorrelation functions (ACFs) of log change rates are insignificant in Figs.~\ref{fig:stylized_fact_ssp}-\ref{fig:stylized_fact_google}. Therefore, both processes can be constructed by using the Markov property. In fact, as discussed in Section~\ref{sec:related_work}, most of the classical models in economics and finance assume that asset prices follow a GBM, which is based on independent asset returns~\cite{Samuelson_1965_2}. In experiments, we also use the Ljung-Box Q-test to check the autocorrelation for a fixed number of lags. Table~\ref{tab:summary_stylized_facts} shows that most of hourly and daily log change rates (overall more than 80\%) do not have autocorrelations. However, most of the 4-hour, 6-hour, and 12-hour rates in the SSP dataset exhibit autocorrelations. 


\textbf{\textsl{Volatility Clustering}} A stochastic process can have uncorrelated but not independent increments. The magnitude of price fluctuations is measured by volatility. \emph{Volatility clustering} is referred to the property that large price variations are more likely to be followed by large price variations~\cite{Cont_2001}. To detect volatility clustering, two commonly used methods are: (i) the ACF of absolute log change rates; and (ii) the ACF of squared log change rates. Volatility clustering has been observed in Fig.~\ref{fig:stylized_fact_ssp} but not Fig.\ref{fig:stylized_fact_google}. Table~\ref{tab:summary_stylized_facts} shows that it is not the property for the majority of online advertisements, particularly, for hourly and daily rates.

In this paper, the discussed jump-diffusion stochastic process can incorporate the first three properties but not volatility clustering. This property can be incorporated by adding another dynamic for volatility such as the SV model discussed in~\cite{Chen_2015_2}. In the following experiments, we use hourly and daily data to develop jump-diffusion stochastic models and then use them for pricing advertising options.

\begin{figure}[t]
\centering
\includegraphics[width=1\linewidth]{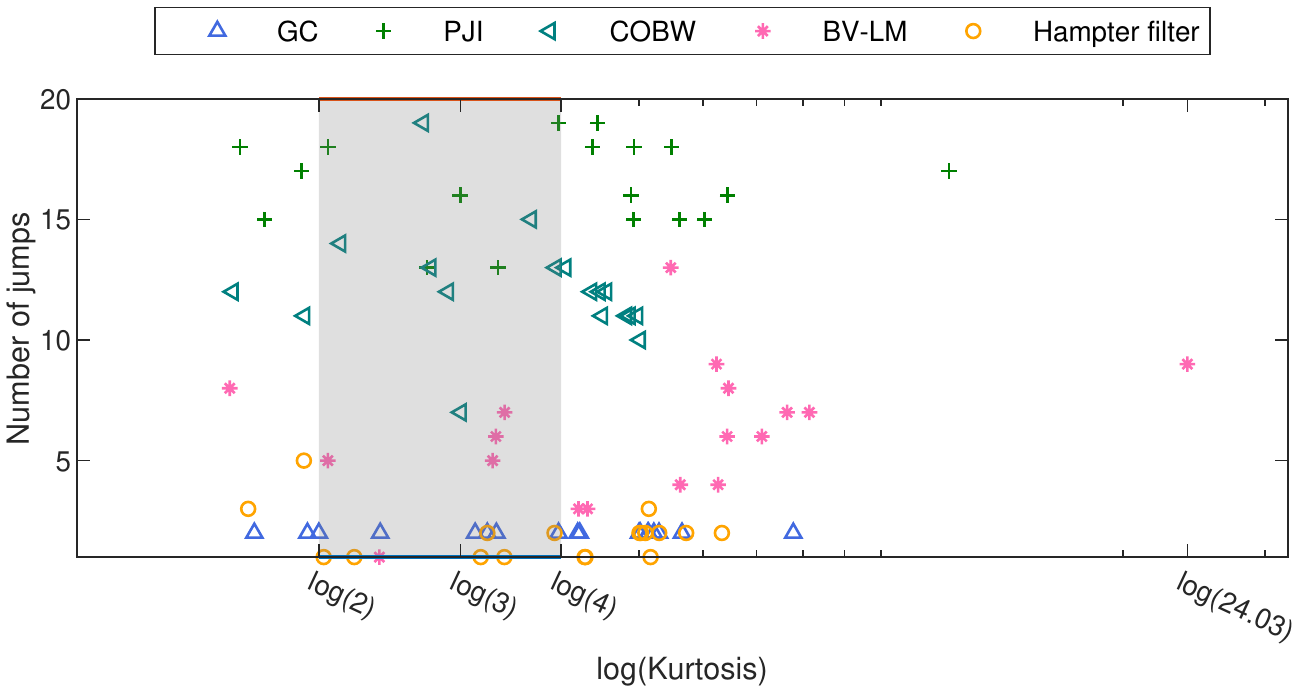}
\caption{Illustration of selecting price jump detection methods on the SSP dataset (hourly time scale).}
\label{fig:jumps_detection_performance_example}
\vspace*{18pt}
\includegraphics[width=1\linewidth]{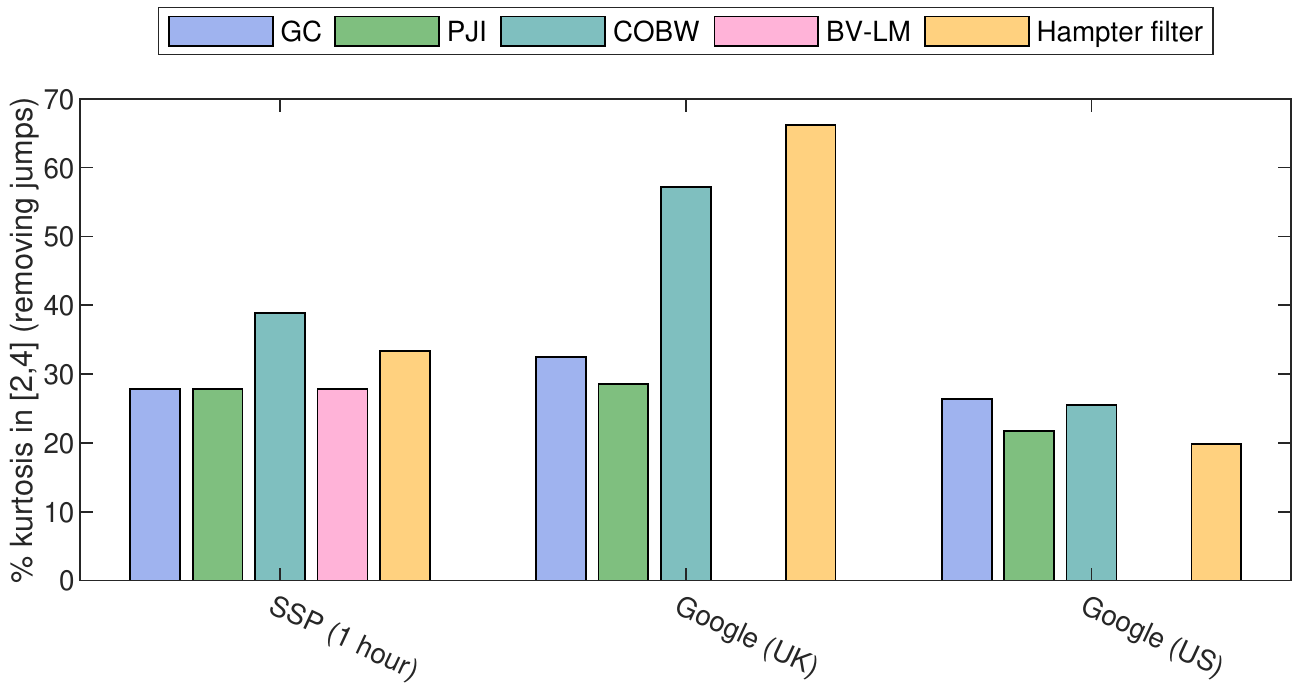}
\caption{Comparison of price jump detection methods on the SSP and Google datasets.}
\label{fig:jumps_detection_performance_summary}
\vspace*{-15pt}
\end{figure}

\begin{figure}[t]
\centering
\includegraphics[width=1\linewidth]{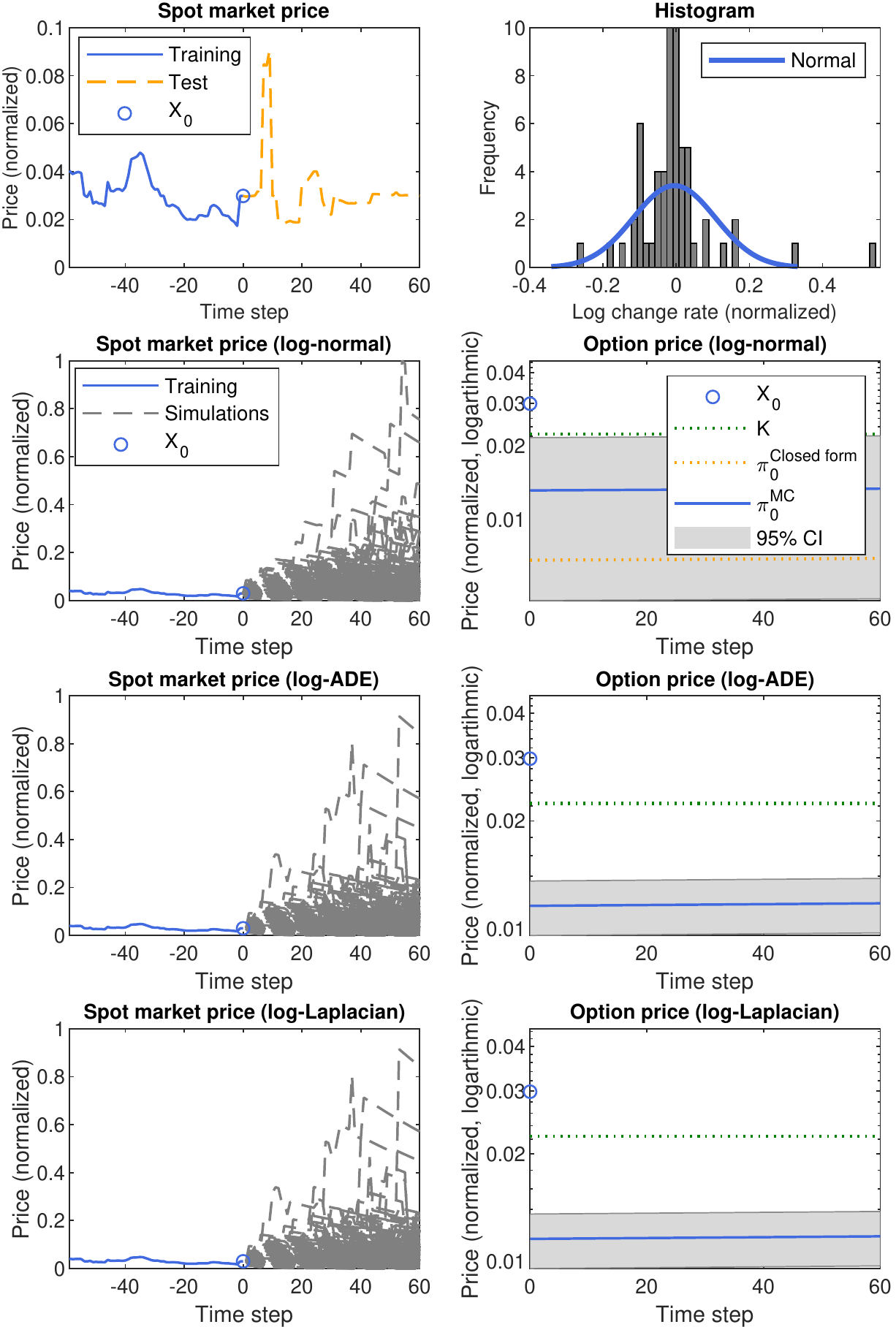}
\caption{Pricing an average price advertising option for the keyword \lq\lq{}panasonic dmc\rq\rq{} in Google UK dataset.}
\label{fig:jump_price_example}
\vspace*{-15pt}
\end{figure}

\subsection{Estimation of Model Parameters}
\label{sec:para_estimation}

One of the widely accepted interpretations of price jumps considers them as time-dependent outliers. Simply, a price jump is an observation that lies in an abnormal distance from other values. In~\cite{Hanousek_2012}, an extensive simulation study was conducted to compare the relative performance of several detection methods for price jumps, including the global centiles (GC), the price-jump index (PJI), the centiles over block-windows (COBW), and various bipower variation methods. The comparison results showed that: (i) the GC and the COBW outperformed others in the case of false positive probability; (ii) the bipower variation method proposed by Lee and Mykland~\cite{Lee_2008} (abbreviated as BV-LM) performed best in the case of false negative probability. In our experiments, we implement all these methods, and the hamper filter~\cite{Pearson_2005} because it has been widely used for outlier detection in signal processing. The identified jumps for the training sets are presented in Table~\ref{tab:summary_stylized_facts}. 

Fig.~\ref{fig:jumps_detection_example} gives an illustration of detecting price jumps on an advertising slot from the SSP dataset. There are large differences in terms of performance among methods. If there is no jump, our discussed stochastic underlying framework in Eq.(\ref{eq:sde_jump_diffusion}) then becomes a GBM. Therefore, after removing the detected jumps, the kurtosis of the training log change rates should approach 3 and we use this creatiera to select the best jump dection method. An illustration of our model selection is described in Fig.~\ref{fig:jumps_detection_performance_example}, where the COBW performances best for the SSP dataset based on hourly time scale as it has the highest number of slots which lie in the range $[2,4]$. Fig.~\ref{fig:jumps_detection_performance_summary} further summarizes the overall results of model selection in our training sets. It should be noted that the BV-LM is not used for Google datasets because there are no intraday campaign records. In summary, after removing the identified jumps, the COBW performs best as it has 40.50\% of slots that the kurtosis of log change rates lies in the range $[2,4]$, followed by the hamper filter with 39.80\%. In the following experiments, jumps are identified by the COBW.

We follow~\cite{Brigo_2007} and estimate other model parameters using the maximum likelihood method. The discretization of Eq.~(\ref{eq:solution_jump_diffusion}) is
\begin{align}
\frac{X(t)}{X(t - \Delta t)} = & \ 
\exp\bigg\{
(\mu - \frac{1}{2} \sigma^2) \Delta t + \sigma \sqrt{\Delta t} \varepsilon_t
\bigg\} \prod_{i=1}^{n_t} Y_i,
\end{align}
where $\varepsilon_t \sim \mathbf{N}(0, 1)$, and $n_t = N(t) - N(t-\Delta t)$ representing the number of price jumps between time $t-\Delta t$ and time $t$. 


Let $\widetilde{Z}(t)= \ln\{X(t)/X(t-\Delta t)\}$, $\mu_V = \mathbb{E}[V_i]$, $\sigma^2_V = \mathrm{Var}[V_i]$, $\mu^* = \mu - \frac{1}{2} \sigma^2 + \lambda \mu_V$, and $\Delta J_t^* = \sum_{i=1}^{n_t} V_i - \lambda \mu_{V} \Delta t$, then we have $\mathbb{E}[\Delta J_t^*] = \mathbb{E}[n_t] \mu_V - \lambda \Delta t \mu_V = 0$, $\mathbb{E}[\Delta J_t^* | n_t] = n_t \mu_V - \lambda \Delta t \mu_V$, $\mathrm{Var}[\Delta J_t^* | n_t] = n_t^2 \sigma_V^2$, $\mathbb{E}[\widetilde{Z}(t) | n_t ] = (\mu - \frac{1}{2} \sigma^2 ) \Delta t + n_t \mu_V$, $\mathrm{Var}[\widetilde{Z}(t) | n_t ] = \sigma^2 \Delta t + n_t^2 \sigma^2_V$. For simplicity, $\widetilde{Z}(t) | n_t$ is considered to be normally distributed, then we can maximize the logarithmic likelihood function as follows
\begin{align}
\argmax_{\mu^*, \sigma \geq 0, \mu_V, \sigma_V \geq 0} 
& \ 
\ln\Big\{ \mathscr{L}(\mu^*, \sigma, \mu_V, \sigma_V) \Big\} \nonumber \\
= & 
\ln\Bigg\{\prod_{j = 1}^{\widetilde{n}} \sum_{k=0}^{\infty} \mathbb{P}(n_t = k) f(\widetilde{z}_j | n_t) \Bigg\},
\end{align}
where $\widetilde{n}$ is the number of observations, the density $f(\widetilde{z}_j)$ is the sum of the conditional probabilities density $f(\widetilde{z}_j | n_t)$ weighted by the probability of the number of jumps $\mathbb{P}(n_t)$. This is an infinite mixture of normal variables, and there is usually one price jump if $\Delta t$ is small. Therefore, the estimation becomes:
\begin{equation}
\label{eq:para_estimation_mle}
\argmax_{\sigma \geq 0, \mu_V, \sigma_V \geq 0} \ln\Bigg\{\prod_{j = 1}^{\widetilde{n}} 
\Big((1 - \lambda \Delta t) f_1(\widetilde{z}_j) + \lambda \Delta t f_2(\widetilde{z}_j)
\Big) \Bigg\},
\end{equation}
where $f_1(\widetilde{z}_j)$ is the density of $\mathbf{N}\big((\mu - \frac{1}{2} \sigma^2 ) \Delta t, \sigma^2 \Delta t \big)$, and $f_2(\widetilde{z}_j)$ is the density of $\mathbf{N}\big((\mu - \frac{1}{2} \sigma^2 ) \Delta t + \mu_V, \sigma^2 \Delta t + \sigma^2_V \big)$. 


\subsection{Option Pricing Results and Revenue Analysis}

Fig.~\ref{fig:jump_price_example} presents examples of pricing an average price advertising option written on the keyword \lq\lq{}panasonic dmc\rq\rq{} from Google UK dataset. Each time step represents a day and the current time is indexed by 0. Therefore, the time indexes of the training set are $-59, \cdots, 0$, and the time indexes of the test set are $1, \cdots, 60$. Daily spot market prices of the first advertising position on the search engine result page exhibit frequent jumps in both training and test sets. The histogram of log change rates in the training set tends to have a higher peak than the normal distribution. By estimating the parameters of different jump-diffusion stochastic models based on the training data, we then simulate the paths of underlying spot market prices for the future period. Each path is generated by lines 6-12 in Algorithm~\ref{algo:option_pricing_mc}. The generated paths are used to calculate the option payoffs, and the average price is calculated based on the prices between time steps 30 and 60. We use the geometric mean (i.e., $\gamma=0$) to compute the average price so that we can compare the option prices calculated from Monte Carlo simulation and our derived pricing formula in Eq.~(\ref{eq:option_pricing_special_case}). Other model parameters are set as follows: $c = 0.2$, $\widetilde{c}=0.2$, $r=0.1$, $K = 0.75 X_0$. Fig.~\ref{fig:jump_price_example} shows that the generated price paths from three jump-diffusion models are different, which further affect the option price. For the log-normal jumps, we see the option price calculated in Eq.~(\ref{eq:option_pricing_special_case}) lies in the 95\% confidence interval of the price computed from Monte Carlo simulation. 

\begin{table}[tp]
\centering
\caption{Market performance and pricing specification.}
\label{tab:market_performance_and_pricing_specifications}
\begin{tabular}{l|l}
\hline
Description & Setting\\
\hline
Bull market & $X_0 \leq \frac{1}{\widetilde{m} + m} \sum_{i = 1}^{\widetilde{m} + m} X_i$\\
Bear market & $X_0 > \frac{1}{\widetilde{m} + m} \sum_{i = 1}^{\widetilde{m} + m} X_i$
\\
ITM & $K=0.75X_0$ \\
ATM & $K=X_0$ \\
OTM & $K=1.25X_0$ \\
\hline
\end{tabular}
\vspace*{-12pt}
\end{table}

\begin{table*}[t]
\centering
\caption{Comparison of revenues from selling advertising options and from advertising auctions.}
\label{tab:revenue_analysis}
\begin{tabular}{l|lll|lll}
\hline
\multicolumn{1}{c|}{Dataset} & \multicolumn{3}{c|}{Bull market 
} & \multicolumn{3}{c}{Bear market
}\\
\cline{2-7}
& ITM & ATM & OTM & ITM & ATM & OTM \\
\hline
\multicolumn{7}{l}{Log-normal jumps (explicit solution)} \\
\hline
SSP & 0.00\% (-83.89\%) &  0.00\% (-79.83\%) &  8.33\% (-75.73\%) &  100.00\% (214.46\%)   & 100.00\% (261.64\%)  &  100.00\% (309.17\%) \\
Google UK & 22.22\% (-30.63\%) & 22.22\% (-17.54\%) & 33.33\% (-4.41\%) &  96.97\% (272.85\%)  & 100.00\% (370.00\%)  & 100.00\% (467.29\%) \\
Google US & 3.84\% (-36.77\%) &  26.92\% (-22.06\%) &  53.85\% (-7.12\%) &  73.07\% (98.23\%)  & 100.00\% (146.81\%)  & 100.00\% (195.76\%) \\
\hline
\multicolumn{7}{l}{Log-normal jumps (Monte Carlo simulation)} \\
\hline
SSP & 0.00\% (-77.74\%) &  8.33\% (-74.07\%) &  8.33\% (-70.29\%) &  100.00\% (75.66\%)  &  100.00\% (122.06\%)  &  100.00\% (168.92\%) \\
Google UK  & 0.00\% (-48.46\%) & 22.22\% (-36.25\%) & 33.33\% (-23.78\%) &  96.97\% (282.04\%)  &  100.00\% (374.75\%)  &  100.00\% (468.55\%) \\
Google US & 21.79\% (-26.51\%) &  39.74\% (-14.15\%) &  57.69\% (-0.86\%) &  76.92\% (107.63\%)  & 100.00\% (152.23\%)  &  100.00\% (198.43\%) \\
\hline
\multicolumn{7}{l}{Log-ADE jumps (Monte Carlo simulation)} \\
\hline
SSP & 0.00\% (-84.50\%) &  0.00\% (-80.64\%) & 0.00\% (-76.63\%) &  100.00\% (101.71\%)  &  100.00\% (143.82\%)  &  100.00\% (187.67\%) \\
Google UK & 33.33\% (-26.01\%) & 33.33\% (-17.46\%) & 33.33\% (-9.65\%) &  83.33\% (255.44\%)  &  84.84\% (353.42\%)  & 84.84\% (453.39\%) \\
Google US & 16.67\% (-30.34\%) & 24.36\% (-20.26\%) &  44.87\% (-8.24\%) &  61.53\% (71.50\%)  & 100.00\% (114.23\%)  &  100.00\% (163.57\%) \\
\hline
\multicolumn{7}{l}{Log-laplacian jumps (Monte Carlo simulation)} \\
\hline
SSP & 0.00\% (-82.22\%) &  8.33\% (-78.45\%) &  8.33\% (-74.56\%) &  100.00\% (74.80\%)  & 100.00\% (121.34\%)  & 100.00\% (168.39\%) \\
Google UK  & 0.00\% (-54.12\%) & 11.11\% (-41.62\%) &  33.33\% (-28.93\%) &  81.82\% (240.84\%)  &  81.82\% (334.64\%)  &  81.82\% (429.33\%) \\
Google US & 2.56\% (-40.95\%) & 17.95\% (-27.79\%) &  41.02\% (-13.82\%) & 65.38\% (69.07\%)  &  100.00\% (115.04\%)  &  100.00\% (162.38\%) \\
\hline
\multicolumn{7}{p{6.75in}}{
The number outside the round brackets represents the percentage of advertising slots which have revenue increase and the number in the round brackets represents the average revenue change of slots under that group.
}
\end{tabular}
\vspace*{-12pt}
\end{table*}

We now examine the effects of the proposed advertising options on the seller's revenue. Recall that an option buyer will exercise the option in the future if he thinks the exercise price is less than what he pays in real-time auctions, otherwise, he will join auctions. Therefore, different combinations of market performance and pricing specification should taken into account. For the former, we simply consider bull and bear markets. A bull market describes the situation that the average spot market price in the future is equal to or higher than its present value while a bear market means the market is going down. Each advertising option can also be priced under three different specifications~\cite{Wilmott_2006_1}: in the money (ITM), at the money (ATM) and out of the money (OTM). Here an ITM option means the exercise price is less than the current spot market price when we price it. ATM and OTM options then represent the situations that the exercise price is equal or higher than the current spot market price, respectively. Pricing specification affects the computed option price, which can further affect the seller's revenue. Table~\ref{tab:market_performance_and_pricing_specifications} gives our settings of market performance and pricing specification. In our experiments, 66.67\% of slots in the SSP dataset, 12.00\% of slots in Google UK dataset, and 75.00\% of slots in Google US dataset, are classified into the bull market. 

Table~\ref{tab:revenue_analysis} shows the overall results of the revenue change for all advertising slots in our datasets. An inventory from a slot can be sold either through an advertising option or in real-time auctions. For the inventory, the revenue change of a slot is defined as a ratio $\frac{\textrm{Revenue}^{\textrm{Option}}-\textrm{Revenue}^{\textrm{Auction}}}{\textrm{Revenue}^{\textrm{Auction}}}$. $\textrm{Revenue}^{\textrm{Auction}}$ is the average spot market price during the period $[t_{\widetilde{m}}, t_{\widetilde{m}+m}]$. $\textrm{Revenue}^{\textrm{Option}}$ is sum of the option price and: (i) the exercise price $K$ if the option is exercised at time step $t_{\widetilde{m}}$; or (ii) the average spot market price during the period $[t_{\widetilde{m}}, t_{\widetilde{m}+m}]$ if the option is not exercised because the inventory can be auctioned off in real time. Whether an option would be exercised depends on its buyer's estimation of his payoff when he is at time step $t_{\widetilde{m}}$. We assume he can well estimate the future market so we use the test data in the period $[t_{\widetilde{m}}, t_{\widetilde{m}+m}]$ to calculate the option payoff. If the payoff is larger than zero, the option will be exercised otherwise not exercised. Table~\ref{tab:revenue_analysis} shows, if there is a bull market in the future, selling advertising options is not a good strategy for the seller because auctions are more profitable. Spot market prices are high in the bull market so option buyers will exercise their purchased options to hedge price risk. If the seller needs to sell advertising options for the bull market, OTM and ATM options are better choices than ITM options. This is because the exercise price is higher. If there is a bear market in the future, selling advertising options can significantly increase the seller's revenue. This is because spot market prices are cheap and option buyers will not exercise the purchased options. They will join advertising auctions instead so the seller's increased revenue is mainly contributed by the option prices. As we use the average price for calculating the option payoff, the effect of the exercise price on the option price is not very sensitive. Therefore, OTM and ATM options might still be better choices than ITM options in the bear market because of the higher exercise price.

\section{Concluding Remarks}
\label{sec:conclusion}

In this paper, we propose a new advertising option pricing framework. The option payoff is based on the power mean of the underlying spot market prices from a specific advertising slot over a future period. Therefore, the option is path-dependent and it addresses the biased option payoff calculation problem. We use a jump-diffusion stochastic process to model the underlying spot market prices over time, which allows discontinuities in price evolution. We obtain a general option pricing solution using Monte Carlo simulation and derive an explicit pricing formula for a special case. The latter is also a generalization of several option pricing models in the previous related studies~\cite{Black_1973, Merton_1973, Merton_1976,Zhang_1998,Kou_2002,Chen_2015_1}. In addition, our datasets cover both display advertising and sponsored search, and a set of stylized facts which are common to a wide set of online advertisements is summarized. To the best of our knowledge, it is the very first comprehensive summary of empirical properties of the spot market prices in online advertising auctions. 

Our study has three limitations. First, the volatility term in the underlying jump-diffusion stochastic process is a constant. Although our empirical findings show that volatility clustering is not a property for many advertisements, it would be good if we could further discuss a case or situation that the volatility term is uncertain such as the stochastic-volatility jump-diffusion model. This can be a future direction. Second, although the proposed underlying jump-diffusion stochastic process can possibly reproduce cyclical patterns based on the homogeneous Poisson process, we can further investigate a more accurate model using time series analysis and signal processing techniques. The challenge is how to use it for pricing advertising options which also rules out arbitrage opportunities. Third, capacity issue is not considered. In this paper, we assume that a media seller has a good estimation of future inventories and rationally sells them in advance via advertising options. Discussing capacity will include a game-theoretical analysis of the combined strategies of both buy-side and sell-side markets. Given the estimated capacity, penalty also can be added into the option pricing. 


\bibliographystyle{ieeetr}
\bibliography{mybib_long}



\section*{Acknowledgment}

This work is supported by the National Research Foundation, Prime Minister's Office, Singapore under its International Research Centre in Singapore Funding Initiative.


%
%

\bigskip

{
\noindent\textbf{Bowei Chen} (M'16) is an Assistant Professor at the Adam Smith Business School of the University of Glasgow, UK. He received a PhD in Computer Science from the University College London, and works in the cross-sections among machine learning, data science and business studies. His research interest lies in the applications of probabilistic modeling and deep learning in marketing, finance and information systems.  

\noindent\textbf{Mohan Kankanhalli} (F'14) received the BTech degree from IIT Kharagpur, Kharagpur, India, and the MS and PhD degrees from the Rensselaer
Polytechnic Institute, Troy, NY. He is a professor with the School of Computing, National University of Singapore, Singapore. He is the director with the SeSaMe Centre and also the Dean, School of Computing at NUS. His research interests include multimedia systems and multimedia security. He is active in the Multimedia Research Community. He was the ACM SIGMM director of Conferences from 2009 to 2013. He is on the editorial boards of several journals. He is a fellow of the IEEE.
\vfill
}

\end{document}